\theoremstyle{plain}
\newtheorem{thm}{Theorem}[section]
\newtheorem{col}[thm]{Corollary}
\theoremstyle{definition}
\newtheorem{con}[thm]{Convention}
\newtheorem{exmp}[thm]{Example}
\theoremstyle{remark}
\newtheorem{rem}[thm]{Remark}
\newcommand{\deDonder}{{d \negmedspace D \mspace{-2mu}}}
\newcommand{\ordersb}[1]{O ( #1 )}
\newcommand{\dt}[1]{\operatorname{Det} \left ( #1 \right )}
\newcommand{\textfrac}[2]{#1 / #2}
\newcommand{\imaginary}{\mathrm{i}}
\newcommand{\enter}{\vspace{\baselineskip}}
\newcommand{\mathbbit}[1]{{\mspace{-1mu} \italicbox{$\mathbb{#1}$} \mspace{2mu}}}
\newcommand{\bbsL}{\mathbbit{\scriptstyle{L}}}
\newcommand{\commutatorbig}[2]{\big [ #1 , #2 \big ]}
\newcommand{\setbig}[1]{\big \{ #1 \big \}}
\newcommand{\gcoupling}{\varkappa}
\newcommand{\gravfr}{\mathfrak{G}}
\newcommand{\pregravfr}{\mathfrak{g}}
\newcommand{\ccgravfr}{\mathfrak{L}}
\newcommand{\preccgravfr}{\mathfrak{l}}
\newcommand{\gravprop}{\mathfrak{P}}
\newcommand{\gravghostprop}{\mathfrak{p}}
\newcommand{\triplevert}{\vert\kern-0.25ex\vert\kern-0.25ex\vert}
\newcommand{\cgreen}[1]{\vcenter{\hbox{\includegraphics[width=\cgreenlength]{#1}}}}
\newcommand{\tcgreen}[1]{\vcenter{\hbox{\includegraphics[width=0.75\cgreenlength]{#1}}}}
\newcommand{\bgm}{b}
\newsavebox{\foobox}
\newcommand{\italicbox}[2][.25]
{%
	\mbox
	{%
		\sbox{\foobox}{#2}%
		\hskip\wd\foobox
		\pdfsave
		\pdfsetmatrix{1 0 #1 1}%
		\llap{\usebox{\foobox}}%
		\pdfrestore
	}%
}
\newcommand{\subalign}[1]{%
  \vcenter{%
    \Let@ \restore@math@cr \default@tag
    \baselineskip\fontdimen10 \scriptfont\tw@
    \advance\baselineskip\fontdimen12 \scriptfont\tw@
    \lineskip\thr@@\fontdimen8 \scriptfont\thr@@
    \lineskiplimit\lineskip
    \ialign{\hfil$\m@th\scriptstyle##$&$\m@th\scriptstyle{}##$\crcr
      #1\crcr
    }%
  }
}
\providecommand{\eqnref}[1]{Equation~\eqref{#1}}
\providecommand{\eqnsref}[1]{Equations~\eqref{#1}}
\providecommand{\eqnsaref}[2]{Equations~\eqref{#1} and \eqref{#2}}
\providecommand{\colssaref}[3]{Corollaries~\ref{#1}, \ref{#2} and \ref{#3}}
\DeclareSymbolFont{extraitalic}      {U}{zavm}{m}{it}
\DeclareMathSymbol{\Qoppa}{\mathord}{extraitalic}{161}
\DeclareMathSymbol{\qoppa}{\mathord}{extraitalic}{162}
\DeclareMathSymbol{\Stigma}{\mathord}{extraitalic}{167}
\DeclareMathSymbol{\Sampi}{\mathord}{extraitalic}{165}
\DeclareMathSymbol{\sampi}{\mathord}{extraitalic}{166}
\DeclareMathSymbol{\stigma}{\mathord}{extraitalic}{168}
\newlength{\graphlength}
\newlength{\cgreenlength}
\title{\textsc{On Perturbative Quantum Gravity with a Cosmological Constant}}
\author{David Prinz\footnote{Riemann fellow at Riemann Center for Geometry and Physics and Institute for Theoretical Physics at Leibniz University Hannover; prinz@itp.uni-hannover.de}}
\date{March 24, 2023}
\begin{document}

\maketitle

\begin{abstract}
	We discuss how the incorporation of a cosmological constant affects the perturbative quantization of (effective) Quantum General Relativity. To this end, we derive the gravitational Slavnov--Taylor identities and appropriate renormalization conditions for the cosmological constant. Additionally, we calculate the corresponding Feynman rules for any vertex valence and with general gauge parameter. Furthermore, we provide the BRST setup and generate the Faddeev--Popov ghost and the symmetric ghost via a gauge fixing fermion and a gauge fixing boson, respectively. Finally, we study the transversality of the graviton propagator and the graviton three-valent vertex.
\end{abstract}

\section{Introduction} \label{sec:introduction}

Perturbative Quantum Gravity is generally believed not to admit a predictive perturbative expansion. This is due to the fact that it is non-renormalizable by power-counting and thus requires higher-derivative counterterms. These have been examined on the level of one and two loops \cite{tHooft_Veltman_OLD,Goroff_Sagnotti_TLD} in the framework of the background field method \cite{Abbott_BFM}. Surprisingly, the counterterms on one-loop level can be combined, using the Chern--Gauss--Bonnet theorem, to a total derivative \cite{tHooft_Veltman_OLD}. This means that the divergences can be absorbed via a field redefinition. On the other hand, on two-loop level such a relation is missing, leading to the well-accepted conjecture that perturbative Quantum Gravity is not renormalizable at two loops \cite{Goroff_Sagnotti_TLD}. Notably, it is still possible to find a non-local field redefinition to absorb the divergent contributions \cite{Krasnov}. We reapproach this long-standing renormalization problem from a different viewpoint, namely via the framework of Hopf algebraic renormalization \cite{Kreimer_Hopf_Algebra,Connes_Kreimer_NG,Connes_Kreimer_0,Connes_Kreimer_1,Connes_Kreimer_2}: Soon after the development of this formalism, it has been applied to gauge theories \cite{Kreimer_Anatomy,vSuijlekom_QED,vSuijlekom_QCD,vSuijlekom_BV,Kreimer_vSuijlekom}. In particular, D.\ Kreimer applied the same reasoning to perturbative Quantum Gravity and conjectured that its non-renormalizable nature could effectively become harmless, if it satisfies a tower of infinite gravitational Slavnov--Taylor identities \cite{Kreimer_QG1,Kreimer_QG2}. This has been worked out in detail by the author in a series of recent articles \cite{Prinz_2,Prinz_3,Prinz_4,Prinz_5,Prinz_6,Prinz_7} and his dissertation \cite{Prinz_PhD}. In particular, therein the Connes--Kreimer renormalization theory has been generalized to non-renormalizable (generalized) gauge theories, so that it applies to (effective) Quantum General Relativity. Specifically, we have derived precise identities which are necessary for a well-defined perturbative expansion \cite{Prinz_3}. These identities need to be checked on the level of the gravitational Feynman rules given in \cite{Prinz_4}, which is a topic of ongoing research. In the present article, we want to address the case of a non-vanishing cosmological constant \(\Lambda\). In particular, we calculate the graviton vertex Feynman rules for arbitrary vertex valence in \thmref{thm:grav-fr-cc} and the graviton propagator for the linearized de Donder gauge fixing with general gauge parameter \(\zeta\) in \thmref{thm:grav-prop-cc}. In addition, we list the three-valent and four-valent graviton vertex Feynman rules explicitly in \exref{exmp:grav-fr-cc}.

To this end, we use the following Lagrange density for (effective) Quantum General Relativity with a cosmological constant (QGR-\(\Lambda\)):
\begin{subequations} \label{eqns:lagrange-density-qgrcc}
\begin{align}
	\mathcal{L}_\text{QGR-\(\Lambda\)} & := \mathcal{L}_\text{GR-\(\Lambda\)} + \mathcal{L}_\text{GR-\(\Lambda\)-GF} + \mathcal{L}_\text{GR-\(\Lambda\)-Ghost}
	\intertext{with the Einstein--Hilbert Lagrange density with a cosmological constant}
	\mathcal{L}_\text{GR-\(\Lambda\)} & := - \frac{1}{2 \varkappa^2} \left ( R + 2 \Lambda  \right ) \dif V_g \, ,
	\intertext{the linearized de Donder gauge fixing Lagrange density}
	\mathcal{L}_\text{GR-\(\Lambda\)-GF} & := - \frac{1}{4 \varkappa^2 \zeta} \bgm^{\mu \nu} \deDonder^{(1)}_\mu \deDonder^{(1)}_\nu
	\intertext{and either the Faddeev--Popov ghost construction, cf.\ \colref{col:linearized_de_donder_gauge_fixing_fermion-cc},}
	\mathcal{L}_\text{GR-\(\Lambda\)-FP-Ghost} & := - \frac{1}{2} \bgm^{\rho \sigma} \left ( \frac{1}{\zeta} \overline{C}^\mu \left ( \partial_\rho \partial_\sigma C_\mu \right ) + \overline{C}^\mu \left ( \partial_\mu \big ( \tensor{\Gamma}{^\nu _\rho _\sigma} C_\nu \big ) - 2 \partial_\rho \big ( \tensor{\Gamma}{^\nu _\mu _\sigma} C_\nu \big ) \right ) \right ) \dif V_\bgm
	\intertext{or the symmetric ghost construction, cf.\ \colref{col:symmetric-ghost-qgrcc},}
	\begin{split}
	\mathcal{L}_\text{GR-\(\Lambda\)-Sym-Ghost} & := - \frac{1}{2 \zeta} \bgm^{\mu \nu} \big ( \partial_\mu \overline{C}^\rho \big ) \big ( \partial_\nu C_\rho \big ) \dif V_\bgm \\
		& \phantom{:=} - \frac{1}{2} \bgm^{\mu \nu} \overline{C}^\rho \left ( \frac{1}{2} \partial_\rho \big ( \tensor{\Gamma}{^\sigma _\mu _\nu} C_\sigma \big ) - \partial_\mu \big ( \tensor{\Gamma}{^\sigma _\rho _\nu} C_\sigma \big ) \right ) \dif V_\bgm \\
		& \phantom{:=} + \frac{1}{2} \bgm^{\mu \nu} \left ( \frac{1}{2} \partial_\rho \big ( \tensor{\Gamma}{^\sigma _\mu _\nu} \overline{C}_\sigma \big ) - \partial_\mu \big ( \tensor{\Gamma}{^\sigma _\rho _\nu} \overline{C}_\sigma \big ) \right ) C^\rho \dif V_\bgm \\
		& \phantom{:=} - \frac{\varkappa^2 \zeta}{8} \bgm_{\mu \nu} \left ( \overline{C}^\rho \big ( \partial_\rho \overline{C}^\mu \big ) \right ) \left ( C^\sigma \big ( \partial_\sigma C^\nu \big ) \right ) \dif V_\bgm \, .
		\end{split}
\end{align}
\end{subequations}
We remark that we consider its linearization with respect to the metric decomposition\footnote{This implies that the indices of the graviton field \(h_{\mu \nu}\) and its inverse \(h^{\mu \nu}\) are raised and lowered via the background metric \(\bgm_{\mu \nu}\) and its inverse \(\bgm^{\mu \nu}\), respectively.} \(g_{\mu \nu} \equiv \bgm_{\mu \nu} + \varkappa h_{\mu \nu}\), where \(h_{\mu \nu}\) is the graviton field, \(\varkappa := \sqrt{\kappa}\) the graviton coupling constant (with \(\kappa := 8 \pi G\) the Einstein gravitational constant) and \(\bgm_{\mu \nu}\) a convenient background metric.\footnote{Typically given via a solution of the vacuum Einstein field equations, i.e.\ either the de Sitter or the anti-de Sitter metric.} In addition, \(R := g^{\nu \sigma} \tensor{R}{^\mu _\nu _\mu _\sigma}\) is the Ricci scalar (with \(\tensor{R}{^\rho _\sigma _\mu _\nu} := \partial_\mu \tensor{\Gamma}{^\rho _\nu _\sigma} - \partial_\nu \tensor{\Gamma}{^\rho _\mu _\sigma} + \tensor{\Gamma}{^\rho _\mu _\lambda} \tensor{\Gamma}{^\lambda _\nu _\sigma} - \tensor{\Gamma}{^\rho _\nu _\lambda} \tensor{\Gamma}{^\lambda _\mu _\sigma}\) the Riemann tensor) and \(\Lambda\) the cosmological constant. Furthermore, \(\dif V_g := \sqrt{- \dt{g}} \dif t \wedge \dif x \wedge \dif y \wedge \dif z\) denotes the Riemannian volume form and \(\dif V_\bgm := \sqrt{- \dt{\bgm}} \dif t \wedge \dif x \wedge \dif y \wedge \dif z\) the background metric volume form. In addition, \(\deDonder^{(1)}_\mu := \bgm^{\rho \sigma} \Gamma_{\mu \rho \sigma} \equiv 0\) denotes the linearized de Donder gauge fixing functional and \(\zeta\) the gauge fixing parameter. Finally, \(C_\mu\) and \(\overline{C}^\mu\) are the graviton-ghost and graviton-antighost, respectively. In addition, we establish the relation between the Faddeev--Popov ghost construction and the symmetric ghost construction in \colref{col:symmetric-ghost-homotopy-qgrcc}, cf.\ \cite{Prinz_5,Prinz_6}.

To study multiplicative renormalization, we decompose \(\mathcal{L}_\text{QGR-\(\Lambda\)}\) into its individual monomials. As has been discussed in \cite{Prinz_3}, the individual monomials can be addressed via the powers in the graviton coupling constant \(\varkappa\), the gauge fixing parameter \(\zeta\), the ghost field \(C\) and now the cosmological constant \(\Lambda\) as follows\footnote{We omit the term \(\mathcal{L}_\text{QGR-\(\Lambda\)}^{-1,0,0,0}\) as it is given by a total derivative.}
\begin{equation}
	\mathcal{L}_\text{QGR-\(\Lambda\)} \equiv \sum_{i = 0}^\infty \sum_{j = -1}^0 \sum_{k = 0}^1 \sum_{l = 0}^1 \mathcal{L}_\text{QGR-\(\Lambda\)}^{i,j,k,l} \, ,
\end{equation}
where we have set \(\mathcal{L}_\text{QGR-\(\Lambda\)}^{i,j,k,l} := \eval[1]{\left ( \mathcal{L}_\text{QGR-\(\Lambda\)} \right )}_{O (\varkappa^i \zeta^j C^k \Lambda^l)}\), cf.\ \cite[Section 1]{Prinz_3} and \cite[Section 3]{Prinz_4}. Again, to absorb the upcoming divergences in a multiplicative manner, we multiply each monomial with an individual function \(Z^r \! \left ( \varepsilon \right )\) in the regulator \(\varepsilon \in \mathbb{R}\):
\begin{equation}
	\mathcal{L}_\text{QGR-\(\Lambda\)}^\text{R} \left ( \varepsilon \right ) := \sum_{i = 0}^\infty \sum_{j = -1}^0 \sum_{k = 0}^1 \sum_{l = 0}^1 Z^{i,j,k,l} \! \left ( \varepsilon \right ) \mathcal{L}_\text{QGR-\(\Lambda\)}^{i,j,k,l} \, ,
\end{equation}
where the regulator \(\varepsilon\) is related to the energy scale through the choice of a regularization scheme. Then, the invariance of \(\mathcal{L}_\text{QGR-\(\Lambda\)}^\text{R} \left ( \varepsilon \right )\) under (residual) diffeomorphisms away from the reference point (where all \(Z\)-factors fulfill \(Z^r \! \left ( \varepsilon_0 \right ) = 1\)) depends on the following identities:\footnote{These identities are known in the literature as Ward--Takahashi identity in the realm of Quantum Electrodynamics and Slavnov--Taylor identities in the realm of Quantum Chromodynamics \cite{Ward,Takahashi,tHooft,Taylor,Slavnov}.}
\begin{subequations} \label{eqns:z-factor_identities_qgrcc}
	\begin{align}
	\frac{Z^{i,0,0,0} \! \left ( \varepsilon \right ) Z^{1,0,0,0} \! \left ( \varepsilon \right )}{Z^{0,0,0,/} \! \left ( \varepsilon \right )} & \equiv Z^{(i+1),0,0,0} \! \left ( \varepsilon \right ) \, , \\
	\frac{Z^{i,0,0,1} \! \left ( \varepsilon \right ) Z^{1,0,0,1} \! \left ( \varepsilon \right )}{Z^{0,0,0,/} \! \left ( \varepsilon \right )} & \equiv Z^{(i+1),0,0,1} \! \left ( \varepsilon \right )
	\intertext{and}
	\frac{Z^{i,0,0,0} \! \left ( \varepsilon \right )}{Z^{0,-1,0,/} \! \left ( \varepsilon \right )} & \equiv \frac{Z^{i,0,1,0} \! \left ( \varepsilon \right )}{Z^{0,-1,1,0} \! \left ( \varepsilon \right )} \, ,
	\end{align}
\end{subequations}
for all \(i \in \mathbb{N}_+\) and \(\varepsilon\) in the domain of the regularization scheme. Here, we have denoted the transversal and longitudinal graviton propagator \(Z\)-factors via \(Z^{0,0,0,/} \! \left ( \varepsilon \right )\) and \(Z^{0,-1,0,/} \! \left ( \varepsilon \right )\), respectively, as they cannot be distinguished by their degree in the cosmological constant since it behaves like an Eucledian mass term. Rather, we suggest the following renormalization conditions, where \(\boldsymbol{\gravprop}_{\mu \nu \rho \sigma} \left ( p^\sigma, \Lambda; \zeta; \epsilon \right )\) denotes the dressed graviton propagator in the convenient decomposition of \thmref{thm:feynman-rule-graviton-propagator-cc-lt-decomposition}:
\begin{equation}
	\boldsymbol{\gravprop}_{\mu \nu \rho \sigma} \left ( p^\sigma, \Lambda; \zeta; \epsilon \right ) = - \frac{2 \imaginary p^2}{p^2 + \Lambda + \imaginary \epsilon + \Sigma_\gravprop \big ( p^2, \Lambda \big )} \left ( \mathbbit{G}_{\mu \nu \rho \sigma} - \left ( \frac{1 - \zeta}{p^2 + \zeta \Lambda + \Sigma_\gravghostprop \big ( p^2, \Lambda \big )} \right ) \mathbbit{L}_{\mu \nu \rho \sigma} \right ) \! \label{eqn:dressed_graviton_propagator_cc}
\end{equation}
Here, \(\Sigma_\gravprop \big ( p^2, \Lambda \big )\) denotes the renormalized self-energy of the full graviton propagator and \(\Sigma_\gravghostprop \big ( p^2, \Lambda \big )\) denotes the renormalized self-energy of the longitudinal graviton propagator. Both are subjected to the following two renormalization conditions:
\begin{subequations} \label{eqns:ren-conditions-grav-propagator-qgrcc}
\begin{align}
	& \Sigma_\gravprop \big ( - \Lambda, \Lambda \big ) = 0 && \Sigma_\gravghostprop \big ( - \Lambda, \Lambda \big ) = 0
	\intertext{and}
	& \eval{\frac{\partial}{\partial p^2}}_{p^2 = - \Lambda} \Sigma_\gravprop \big ( p^2, \Lambda \big ) = 0 && \eval{\frac{\partial}{\partial p^2}}_{p^2 = - \Lambda} \Sigma_\gravghostprop \big ( p^2, \Lambda \big ) = 0
\end{align}
\end{subequations}
This determines all renormalization constants uniquely. We will discuss these relations in future work. In particular, we remark that with the introduction of the \(Z\)-factor \(Z_\Lambda \! \left ( \varepsilon \right )\) we obtain an additional energy-curvature relation in addition to the Einstein field equations. However, this is only relevant for intermediate calculations and does not contribute to physical scattering amplitudes.

Finally, we discuss the transversality of (Effective) Quantum General Relativity with a cosmological constant in the sense of \cite{Prinz_7}: We find that the graviton propagator does not split into the sum of its transversal and longitudinal projection tensors in \thmref{thm:feynman-rule-graviton-propagator-cc-lt-decomposition} and that the three-valent vertex Feynman rule does not satisfy the contraction identity in \thmref{thm:three-valent-contraction-identities-qgrcc}. Both these identities are satisfied in the case \(\Lambda = 0\). The implication of these results on transversality will be discussed in future work.

\section{Feynman rules}

We start this article by calculating the corresponding Feynman rules as a generalization of \cite{Prinz_4}. In particular, we focus only on the graviton vertex Feynman rules and the graviton propagator: This is due to the fact that the other Feynman rules remain valid with the replacement \(\hat{\eta} \rightsquigarrow \hat{\bgm}\). We refer to \cite{Prinz_2,Prinz_4,Prinz_7} for further information and references.

\enter

\begin{con}
	If the expansion of the gravity-matter Lagrange density is considered with respect to the general background metric \(\bgm_{\mu \nu}\), then we obtain the following additional factors: The vertex Feynman rules are multiplied by \(\sqrt{- \dt{\bgm}}\) and the propagator Feynman rules by \(\textfrac{1}{\sqrt{- \dt{\bgm}}}\). We omit both factors for a cleaner presentation, as diagrammatic Feynman rules do in fact only depend on them via their loop number. More precisely, a Feynman graph with loop number \(L\) obtains an additional factor of \(\sqrt{- \dt{\bgm}}^{L-1}\).
\end{con}

\enter

\begin{thm} \label{thm:grav-fr-cc}
	Given the situation of \cite[Theorem 4.10]{Prinz_4} with the metric decomposition \(g_{\mu \nu} = \bgm_{\mu \nu} + \gcoupling h_{\mu \nu}\), where \(\bgm_{\mu \nu}\) is a suitable background metric. Then the graviton \(2\)-point vertex Feynman rule for (effective) Quantum General Relativity with a cosmological constant reads (where \(\zeta\) denotes the gauge parameter, \(\Lambda\) the cosmological constant and we use momentum conservation on the quadratic term, i.e.\ set \(p_1^\sigma := p^\sigma\) and \(p_2^\sigma := - p^\sigma\)):
	\begin{equation} \label{eqn:quadratic-fr-grav-cc}
	\begin{split}
		\ccgravfr_2^{\mu_1 \nu_1 \vert \mu_2 \nu_2} \left ( p^\sigma, \Lambda; \zeta \right ) & = \frac{\imaginary}{4} \left ( 1 - \frac{1}{\zeta} \right ) \left ( p^{\mu_1} p^{\nu_1} \hat{\bgm}^{\mu_2 \nu_2} + p^{\mu_2} p^{\nu_2} \hat{\bgm}^{\mu_1 \nu_1} \right ) \\
		& \hphantom{=} \mkern-55mu - \frac{\imaginary}{8} \left ( 1 - \frac{1}{\zeta} \right ) \left ( p^{\mu_1} p^{\mu_2} \hat{\bgm}^{\nu_1 \nu_2} + p^{\mu_1} p^{\nu_2} \hat{\bgm}^{\nu_1 \mu_2} + p^{\nu_1} p^{\mu_2} \hat{\bgm}^{\mu_1 \nu_2} + p^{\nu_1} p^{\nu_2} \hat{\bgm}^{\mu_1 \mu_2} \right ) \\
		& \hphantom{=} \mkern-55mu - \frac{\imaginary}{4} \left ( 1 - \frac{1}{2 \zeta} \right ) \left ( p^2 \hat{\bgm}^{\mu_1 \nu_1} \hat{\bgm}^{\mu_2 \nu_2} \right ) \\
		& \hphantom{=} \mkern-55mu + \frac{\imaginary}{8} \left ( p^2 \hat{\bgm}^{\mu_1 \mu_2} \hat{\bgm}^{\nu_1 \nu_2} + p^2 \hat{\bgm}^{\mu_1 \nu_2} \hat{\bgm}^{\nu_1 \mu_2} \right ) \\
		& \hphantom{=} \mkern-55mu + \frac{\imaginary \Lambda}{4} \left ( \hat{\bgm}^{\mu_1 \mu_2} \hat{\bgm}^{\nu_1 \nu_2} + \hat{\bgm}^{\mu_1 \nu_2} \hat{\bgm}^{\nu_1 \mu_2} - \hat{\bgm}^{\mu_1 \nu_1} \hat{\bgm}^{\mu_2 \nu_2} \right )
	\end{split}
	\end{equation}
	Furthermore, the graviton \(n\)-point vertex Feynman rules with \(n > 2\) for (effective) Quantum General Relativity with a cosmological constant read (where \(\gravfr_n\) denotes the corresponding Feynman rules without a cosmological constant, cf.\ \cite[Theorem 4.10]{Prinz_4}, and \(\mathfrak{V}_n\) denotes the corresponding Feynman rules of the Riemannian volume form, cf.\ \cite[Lemma 4.8]{Prinz_4}):
		\begin{equation}
		\begin{split}
			\ccgravfr_n^{\mu_1 \nu_1 \vert \cdots \vert \mu_n \nu_n} \left ( p_1^\sigma, \cdots, p_n^\sigma, \Lambda \right ) & = \eval{\left ( \gravfr_n^{\mu_1 \nu_1 \vert \cdots \vert \mu_n \nu_n} \left ( p_1^\sigma, \cdots, p_n^\sigma \right ) + \frac{\imaginary \Lambda}{\gcoupling^2} \mathfrak{V}_n^{\mu_1 \nu_1 \vert \cdots \vert \mu_n \nu_n} \right )}_{\hat{\eta} \rightsquigarrow \hat{\bgm}}
		\end{split}
		\end{equation}
\end{thm}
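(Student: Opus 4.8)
The plan is to obtain the cosmological-constant contribution by isolating it in the Lagrange density and differentiating term-by-term. The only place where $\Lambda$ enters $\mathcal{L}_\text{QGR-$\Lambda$}$ is the monomial $\mathcal{L}_\text{GR-$\Lambda$}^{i,0,0,1} = -\tfrac{\Lambda}{\varkappa^2} \dif V_g$ coming from the $2\Lambda$ piece of the Einstein--Hilbert term; the gauge-fixing and ghost sectors are $\Lambda$-independent. Hence the $n$-point vertex Feynman rule splits additively as the $\Lambda = 0$ rule $\gravfr_n$ (established in \cite[Theorem 4.10]{Prinz_4}) plus the contribution of $-\tfrac{\Lambda}{\varkappa^2} \dif V_g = -\tfrac{\Lambda}{\varkappa^2} \sqrt{-\dt{g}}\, \dif t \wedge \dif x \wedge \dif y \wedge \dif z$. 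Since the Feynman rule of the Riemannian volume form $\sqrt{-\dt{g}}$ has already been computed for arbitrary valence in \cite[Lemma 4.8]{Prinz_4} and is denoted $\mathfrak{V}_n$, the vertex contribution of this monomial is $\tfrac{\imaginary \Lambda}{\varkappa^2} \mathfrak{V}_n$ after the standard $\imaginary$ from the Gell-Mann--Low expansion and with $\hat\eta \rightsquigarrow \hat\bgm$ since we now expand about the background metric $\bgm_{\mu\nu}$ rather than $\eta_{\mu\nu}$. This proves the $n > 2$ formula directly.

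For the quadratic term one cannot simply quote $\mathfrak{V}_2$ and $\gravfr_2$: the de Donder gauge-fixing term $\mathcal{L}_\text{GR-$\Lambda$-GF}$ contributes at order $\varkappa^2$ to the two-point function and the normalization of the kinetic operator must be tracked carefully. I would therefore expand $\mathcal{L}_\text{GR-$\Lambda$} + \mathcal{L}_\text{GR-$\Lambda$-GF}$ to second order in $h_{\mu\nu}$ about $\bgm_{\mu\nu}$, symmetrize in $(\mu_1\nu_1) \leftrightarrow (\mu_2\nu_2)$ and in $\mu_i \leftrightarrow \nu_i$, pass to momentum space with $\partial_\mu \rightsquigarrow \imaginary p_\mu$ and impose $p_1 = -p_2 = p$. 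The terms without $\Lambda$ must reproduce the first four lines of \eqref{eqn:quadratic-fr-grav-cc} (these are exactly $\gravfr_2$ with $\hat\eta \rightsquigarrow \hat\bgm$, plus the $\zeta$-dependent gauge-fixing pieces), which is a useful consistency check. The genuinely new piece comes from expanding $\sqrt{-\dt{g}}$ to second order: using $\sqrt{-\dt{g}} = \sqrt{-\dt{\bgm}}\bigl(1 + \tfrac{\varkappa}{2} h + \tfrac{\varkappa^2}{8}(h^2 - 2 h_{\mu\nu} h^{\mu\nu}) + \order{\varkappa^3}\bigr)$ with $h := \bgm^{\mu\nu} h_{\mu\nu}$, the $\order{\varkappa^2}$ term of $-\tfrac{\Lambda}{\varkappa^2}\sqrt{-\dt g}$ is $-\tfrac{\Lambda}{8}(h^2 - 2 h_{\mu\nu}h^{\mu\nu})\sqrt{-\dt\bgm}$; multiplying by $\imaginary$ and extracting the vertex tensor by functional differentiation in $h_{\mu_1\nu_1}$ and $h_{\mu_2\nu_2}$ (with the appropriate symmetrized $\partial h / \partial h$ factors $\tfrac12(\bgm^{\mu\mu'}\bgm^{\nu\nu'} + \bgm^{\mu\nu'}\bgm^{\nu\mu'})$) yields precisely the last line of \eqref{eqn:quadratic-fr-grav-cc}, namely $\tfrac{\imaginary \Lambda}{4}(\hat\bgm^{\mu_1\mu_2}\hat\bgm^{\nu_1\nu_2} + \hat\bgm^{\mu_1\nu_2}\hat\bgm^{\nu_1\mu_2} - \hat\bgm^{\mu_1\nu_1}\hat\bgm^{\mu_2\nu_2})$.

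The main obstacle is purely bookkeeping: getting the combinatorial weights right when converting the symmetric bilinear form in $h$ into a vertex tensor that is symmetric under the pair-exchange $(\mu_1\nu_1\,p_1) \leftrightarrow (\mu_2\nu_2\,p_2)$ and under each $\mu_i \leftrightarrow \nu_i$, and ensuring the $\tfrac{\imaginary \Lambda}{\varkappa^2}\mathfrak{V}_2$ prescription for $n = 2$ reproduces the same last line so that the statement is internally consistent (the $n = 2$ case is stated separately only because of the extra gauge-fixing contribution, not because the volume-form contribution differs). One subtlety worth flagging is that $\mathfrak{V}_n$ in \cite[Lemma 4.8]{Prinz_4} was defined with $\hat\eta$; since $\sqrt{-\dt{g}}$ depends on the metric only algebraically, the replacement $\hat\eta \rightsquigarrow \hat\bgm$ is literal and introduces no new curvature terms, so the argument carries over verbatim. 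With these checks in place the theorem follows.
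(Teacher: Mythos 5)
Your proposal is correct and follows essentially the same route as the paper: the $\Lambda$-dependence sits entirely in $-\tfrac{\Lambda}{\varkappa^2}\dif V_g$, so by linearity the vertex rule is $\gravfr_n$ plus the volume-form contribution $\tfrac{\imaginary\Lambda}{\varkappa^2}\mathfrak{V}_n$ with $\hat\eta \rightsquigarrow \hat\bgm$, supplemented by the gauge-fixing pieces at $n=2$. The one point you gloss over in the general-$n$ step is the origin of the overall sign in front of $\mathfrak{V}_n$ (naively $\imaginary \cdot (-\tfrac{\Lambda}{\varkappa^2})$ would give the opposite sign): the paper attributes the compensating relative minus sign to the Fourier transform of a momentum-independent term, which your explicit second-order expansion at $n=2$ confirms but your citation-based argument for $n>2$ should also record.
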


\begin{proof}
	This follows directly from the linearity of the Feynman rules together with \cite[Theorem 4.10]{Prinz_4} and \cite[Lemma 4.8]{Prinz_4}. We remark an additional relative minus sign in front of \(\mathfrak{V}_n\) coming from the Fourier transform, as it does not depend on momenta.
\end{proof}

\enter

\begin{thm} \label{thm:grav-prop-cc}
	Given the situation of \thmref{thm:grav-fr-cc}, the graviton propagator Feynman rule for (effective) Quantum General Relativity with a cosmological constant reads:
	\begin{equation} \label{eqn:grav-prop-cc}
	\begin{split}
		\gravprop_{\mu_1 \nu_1 \vert \mu_2 \nu_2} \left ( p^\sigma, \Lambda; \zeta; \epsilon \right ) & = - \frac{2 \imaginary}{p^2 + \Lambda + \imaginary \epsilon} \Bigg [ \left ( \hat{\bgm}_{\mu_1 \mu_2} \hat{\bgm}_{\nu_1 \nu_2} + \hat{\bgm}_{\mu_1 \nu_2} \hat{\bgm}_{\nu_1 \mu_2} - \hat{\bgm}_{\mu_1 \nu_1} \hat{\bgm}_{\mu_2 \nu_2} \right ) \\
		& \! \! \! \! \! \! \! - \left ( \frac{1 - \zeta}{p^2 + \zeta \Lambda} \right ) \left ( \hat{\bgm}_{\mu_1 \mu_2} p_{\nu_1} p_{\nu_2} + \hat{\bgm}_{\mu_1 \nu_2} p_{\nu_1} p_{\mu_2} + \hat{\bgm}_{\nu_1 \mu_2} p_{\mu_1} p_{\nu_2} + \hat{\bgm}_{\nu_1 \nu_2} p_{\mu_1} p_{\mu_2} \right ) \Bigg ]
	\end{split}
	\end{equation}
\end{thm}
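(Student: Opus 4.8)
The plan is to obtain the propagator as the inverse of the quadratic graviton vertex $\ccgravfr_2$ of \thmref{thm:grav-fr-cc}: up to the overall sign and normalization conventions of \cite{Prinz_4} (some power of $\imaginary$ and of $2$), $\gravprop$ is characterized by $\gravprop_{\mu_1 \nu_1 \vert \alpha \beta} \, \ccgravfr_2^{\alpha \beta \vert \mu_2 \nu_2} = \bbI_{\mu_1 \nu_1}{}^{\mu_2 \nu_2}$ on the space of symmetric covariant $2$-tensors, with the Feynman prescription effected by $\Lambda \rightsquigarrow \Lambda + \imaginary \epsilon$ in the overall denominator. Everything then reduces to a finite-dimensional linear-algebra problem with \eqnref{eqn:quadratic-fr-grav-cc} as the sole input, so there is no analytic content beyond that of \thmref{thm:grav-fr-cc}.

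First I would expand \eqnref{eqn:quadratic-fr-grav-cc} in a basis of the algebra of Lorentz structures generated by $\hat{\bgm}_{\mu \nu}$ and $p_\mu$ on symmetric pairs. The economical choice is the Barnes--Rivers spin-projector basis $\big\{ P^{(2)}, P^{(1)}, P^{(0)}_{s}, P^{(0)}_{w}, P^{(0)}_{sw}, P^{(0)}_{ws} \big\}$, built from the transverse and longitudinal vector projectors $\theta_{\mu \nu} := \hat{\bgm}_{\mu \nu} - p_\mu p_\nu / p^2$ and $\omega_{\mu \nu} := p_\mu p_\nu / p^2$: there $P^{(2)}$ and $P^{(1)}$ are mutually orthogonal idempotents annihilated by the trace and by contraction with $p$, while $P^{(0)}_{s}, P^{(0)}_{w}, P^{(0)}_{sw}, P^{(0)}_{ws}$ close into the $2 \times 2$ matrix algebra of the spin-$0$ sector. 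In this basis $\ccgravfr_2$ becomes block-diagonal: a scalar on the spin-$2$ block, a scalar on the spin-$1$ block, and a symmetric $2 \times 2$ matrix on the spin-$0$ block. I would read off these coefficients from \eqnref{eqn:quadratic-fr-grav-cc}; note in particular that the cosmological term $\frac{\imaginary \Lambda}{4}\big( \hat{\bgm}^{\mu_1 \mu_2} \hat{\bgm}^{\nu_1 \nu_2} + \hat{\bgm}^{\mu_1 \nu_2} \hat{\bgm}^{\nu_1 \mu_2} - \hat{\bgm}^{\mu_1 \nu_1} \hat{\bgm}^{\mu_2 \nu_2} \big)$, which is (twice) the trace-reversal operator, is \emph{not} transverse and hence contributes to all three blocks, whereas the $\zeta$-dependent gauge-fixing pieces contribute to the spin-$1$ and spin-$0$ blocks only.

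I would then invert block by block. The spin-$2$ scalar inverts to the overall factor $\propto (p^2 + \Lambda + \imaginary \epsilon)^{-1}$; the spin-$1$ scalar together with the determinant of the $2 \times 2$ spin-$0$ matrix produces the second denominator $\propto (p^2 + \zeta \Lambda)^{-1}$; and the $2 \times 2$ inverse, when re-expanded in $\theta$ and $\omega$ and hence back in $\hat{\bgm}$ and $p$, must recombine --- this being the step that needs care --- into exactly the two structures $\bbG_{\mu_1 \nu_1 \vert \mu_2 \nu_2}$ and the four-term longitudinal tensor $\hat{\bgm}_{\mu_1 \mu_2} p_{\nu_1} p_{\nu_2} + \cdots$ of \eqnref{eqn:grav-prop-cc}, with all a priori possible trace ($\hat{\bgm}_{\mu_1 \nu_1} \hat{\bgm}_{\mu_2 \nu_2}$) and quartic ($p_{\mu_1} p_{\nu_1} p_{\mu_2} p_{\nu_2}$) pieces cancelling. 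Two limits provide checks: at $\Lambda = 0$ one must recover the propagator of \cite[Theorem 4.10]{Prinz_4}, and at $\zeta = 1$ the second term must drop, leaving the pure de Donder form $- \frac{2 \imaginary}{p^2 + \Lambda + \imaginary \epsilon} \bbG_{\mu_1 \nu_1 \vert \mu_2 \nu_2}$.

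The main obstacle is the bookkeeping of the cosmological constant across the sectors. In the $\Lambda = 0$ theory $\ccgravfr_2$ is already block-diagonal in the naive transverse/longitudinal split and the inversion is immediate; with $\Lambda \neq 0$ the $\bbG$-term mixes the scalar (trace) mode into the transverse-traceless sector and simultaneously shifts the gauge sector's effective mass, so one must verify carefully that $p^2 + \Lambda$ (from the cosmological term combined with the background-curvature contribution of the $R$-term) and $p^2 + \zeta \Lambda$ (from the de Donder gauge-fixing) are the \emph{only} denominators generated, and that the spin-$0$ inversion leaves behind neither a spurious $p_{\mu_1} p_{\nu_1} p_{\mu_2} p_{\nu_2}/(p^2 + \Lambda)$ term nor a residual trace term. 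That this cancellation occurs is what makes the closed form \eqnref{eqn:grav-prop-cc} so compact, and --- in contrast to the $\Lambda = 0$ case --- is also precisely the structural reason why the propagator no longer splits into its transverse and longitudinal projectors.
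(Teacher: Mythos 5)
Your proposal follows exactly the paper's route: the propagator is defined as the inverse of the quadratic vertex \(\ccgravfr_2\) of \thmref{thm:grav-fr-cc} on symmetric \(2\)-tensors, normalized to \(\tfrac{1}{2}\big(\hat{\delta}^{\mu_1}_{\mu_3}\hat{\delta}^{\nu_1}_{\nu_3}+\hat{\delta}^{\mu_1}_{\nu_3}\hat{\delta}^{\nu_1}_{\mu_3}\big)\) with the \(\imaginary\epsilon\) prescription inserted afterwards, which is precisely the one-line proof given in the paper. Your Barnes--Rivers block-diagonalization is simply an explicit organization of that same inversion (the paper leaves the computation implicit), and your consistency checks at \(\Lambda=0\) and \(\zeta=1\) are sound.
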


\begin{proof}
	To calculate the graviton propagator, we recall the quadratic Feynman rule from \eqnref{eqn:quadratic-fr-grav-cc} and then invert it to obtain the propagator, i.e.\ such that\footnote{Where we treat the tuples of indices \(\mu_i \nu_i\) as one index, i.e.\ exclude the a priori possible term \(\hat{\bgm}^{\mu_1 \nu_1} \hat{\bgm}_{\mu_3 \nu_3}\) on the right hand side.}
	\begin{equation}
		\ccgravfr_2^{\mu_1 \nu_1 \vert \mu_2 \nu_2} \left ( p^\sigma, \Lambda; \zeta \right ) \gravprop_{\mu_2 \nu_2 \vert \mu_3 \nu_3} \left ( p^\sigma, \Lambda; \zeta; 0 \right ) = \frac{1}{2} \left ( \hat{\delta}^{\mu_1}_{\mu_3} \hat{\delta}^{\nu_1}_{\nu_3} + \hat{\delta}^{\mu_1}_{\nu_3} \hat{\delta}^{\nu_1}_{\mu_3} \right )
	\end{equation}
	holds.
\end{proof}

\begin{exmp} \label{exmp:grav-fr-cc}
	Given the situation of \thmref{thm:grav-fr-cc}, the three- and four-valent graviton vertex Feynman rules read as follows:\footnote{We have used momentum conservation, i.e.\ performed a partial integration on the Lagrange density for General Relativity, to obtain a more compact form.}
	\begin{subequations} \label{eqns:grav-fr-cc-triple}
	\begin{align}
		& \ccgravfr_3^{\mu_1 \nu_1 \vert \mu_2 \nu_2 \vert \mu_3 \nu_3} \left ( p_1^\sigma, p_2^\sigma, p_3^\sigma, \Lambda \right ) = \frac{\imaginary}{8} \sum_{\mu_i \leftrightarrow \nu_i} \sum_{s \in S_3} \preccgravfr_3^{\mu_{s(1)} \nu_{s(1)} \vert \mu_{s(2)} \nu_{s(2)} \vert \mu_{s(3)} \nu_{s(3)}} \left ( p_{s(1)}^\sigma, p_{s(2)}^\sigma \right )
		\intertext{with}
		\begin{split}
			& \preccgravfr_3^{\mu_1 \nu_1 \vert \mu_2 \nu_2 \vert \mu_3 \nu_3} \left ( p_1^\sigma, p_2^\sigma, \Lambda \right ) = \frac{\gcoupling}{4} \Bigg \{ \frac{1}{2} p_1^{\mu_3} p_2^{\nu_3} \hat{\bgm}^{\mu_1 \mu_2} \hat{\bgm}^{\nu_1 \nu_2} - p_1^{\mu_3} p_2^{\mu_1} \hat{\bgm}^{\nu_1 \mu_2} \hat{\bgm}^{\nu_2 \nu_3} \\
			& \phantom{\pregravfr_3^{\mu_1 \nu_1 \vert \mu_2 \nu_2 \vert \mu_3 \nu_3} \left ( p_1^\sigma, p_2^\sigma \right ) = \frac{\gcoupling}{4} \Bigg \{} + \left ( p_1 \cdot p_2 \right ) \bigg ( - \frac{1}{2} \hat{\bgm}^{\mu_1 \nu_1 } \hat{\bgm}^{\mu_2 \mu_3 } \hat{\bgm}^{\nu_2 \nu_3 } + \hat{\bgm}^{\mu_1 \nu_2 } \hat{\bgm}^{\mu_2 \nu_3 } \hat{\bgm}^{\mu_3 \nu_1 } \\ & \phantom{\pregravfr_3^{\mu_1 \nu_1 \vert \mu_2 \nu_2 \vert \mu_3 \nu_3} \left ( p_1^\sigma, p_2^\sigma \right ) = \frac{\gcoupling}{4} \Bigg \{ + \left ( p_1 \cdot p_2 \right ) \bigg (} - \frac{1}{4} \hat{\bgm}^{\mu_1 \mu_2 } \hat{\bgm}^{\nu_1 \nu_2 } \hat{\bgm}^{\mu_3 \nu_3 } + \frac{1}{8} \hat{\bgm}^{\mu_1 \nu_1 } \hat{\bgm}^{\mu_2 \nu_2 } \hat{\bgm}^{\mu_3 \nu_3 } \bigg ) \! \Bigg \} \\
			& \phantom{\preccgravfr_3^{\mu_1 \nu_1 \vert \mu_2 \nu_2 \vert \mu_3 \nu_3} \left ( p_1^\sigma, p_2^\sigma \right ) =} + \imaginary \varkappa \Lambda \left \{ \frac{1}{8} \hat{\bgm}^{\mu_1 \nu_1} \hat{\bgm}^{\mu_2 \nu_2} \hat{\bgm}^{\mu_3 \nu_3} - \frac{1}{4} \hat{\bgm}^{\mu_1 \nu_1} \hat{\bgm}^{\mu_2 \nu_3} \hat{\bgm}^{\mu_3 \nu_2} + \hat{\bgm}^{\mu_1 \nu_2} \hat{\bgm}^{\mu_2 \nu_3} \hat{\bgm}^{\mu_3 \nu_1} \right \}
		\end{split}
	\end{align}
	\end{subequations}
	and
	\begin{subequations}
	\begin{align}
		& \ccgravfr_4^{\mu_1 \nu_1 \vert \cdots \vert \mu_4 \nu_4} \left ( p_1^\sigma, \cdots , p_4^\sigma, \Lambda \right ) = \frac{\imaginary}{16} \sum_{\mu_i \leftrightarrow \nu_i} \sum_{s \in S_4} \preccgravfr_4^{\mu_{s(1)} \nu_{s(1)} \vert \cdots \vert \mu_{s(4)} \nu_{s(4)}} \left ( p_{s(1)}^\sigma, p_{s(2)}^\sigma \right )
		\intertext{with}
		\begin{split}
		& \preccgravfr_4^{\mu_1 \nu_1 \vert \cdots \vert \mu_4 \nu_4} \left ( p_1^\sigma, p_2^\sigma, \Lambda \right ) = \frac{\gcoupling}{4} \Bigg \{ - p_1^{\mu_3} p_2^{\nu_3} \hat{\bgm}^{\mu_1 \mu_2} \hat{\bgm}^{\nu_1 \mu_4} \hat{\bgm}^{\nu_2 \nu_4} + 2 p_1^{\mu_3} p_2^{\mu_1} \hat{\bgm}^{\nu_1 \mu_2} \hat{\bgm}^{\nu_2 \mu_4} \hat{\bgm}^{\nu_3 \nu_4} \\
		& \phantom{\pregravfr_4^{\mu_1 \nu_1 \vert \cdots \vert \mu_4 \nu_4} \left ( p_1^\sigma, p_2^\sigma \right ) = \frac{\gcoupling}{4} \Bigg \{} - \frac{1}{2} p_1^{\mu_3} p_2^{\mu_1} \hat{\bgm}^{\nu_1 \mu_2} \hat{\bgm}^{\nu_2 \nu_3} \hat{\bgm}^{\mu_4 \nu_4} + p_1^{\mu_3} p_2^{\nu_3} \hat{\bgm}^{\mu_1 \mu_2} \hat{\bgm}^{\nu_1 \nu_2} \hat{\bgm}^{\mu_4 \nu_4} \\
		& \phantom{\pregravfr_4^{\mu_1 \nu_1 \vert \cdots \vert \mu_4 \nu_4} \left ( p_1^\sigma, p_2^\sigma \right ) = \frac{\gcoupling}{4} \Bigg \{} - \frac{1}{2} p_1^{\mu_3} p_2^{\mu_4} \hat{\bgm}^{\mu_1 \mu_2} \hat{\bgm}^{ \nu_1 \nu_2} \hat{\bgm}^{\nu_3 \nu_4} + p_1^{\mu_3} p_2^{\mu_4} \hat{\bgm}^{\mu_1 \mu_2} \hat{\bgm}^{\nu_1 \nu_3} \hat{\bgm}^{\nu_2 \nu_4} \\
		& \phantom{\pregravfr_4^{\mu_1 \nu_1 \vert \cdots \vert \mu_4 \nu_4} \left ( p_1^\sigma, p_2^\sigma \right ) = \frac{\gcoupling}{4} \Bigg \{} - \frac{1}{2} p_1^{\mu_2} p_2^{\mu_3} \hat{\bgm}^{\mu_1 \nu_1} \hat{\bgm}^{\nu_2 \mu_4} \hat{\bgm}^{\nu_3 \nu_4} + \frac{1}{4} p_1^{\mu_2} p_2^{\mu_1} \hat{\bgm}^{\nu_1 \nu_2} \hat{\bgm}^{\mu_3 \mu_4} \hat{\bgm}^{\nu_3 \nu_4} \\
		& \phantom{\pregravfr_4^{\mu_1 \nu_1 \vert \cdots \vert \mu_4 \nu_4} \left ( p_1^\sigma, p_2^\sigma \right ) =} \! \! \! \! + \left ( p_1 \cdot p_2 \right ) \bigg ( - \frac{1}{16} \hat{\bgm}^{\mu_1 \mu_2} \hat{\bgm}^{\nu_1 \nu_2} \hat{\bgm}^{\mu_3 \nu_3} \hat{\bgm}^{\mu_4 \nu_4} + \frac{1}{8} \hat{\bgm}^{\mu_1 \mu_2} \hat{\bgm}^{\nu_1 \nu_2} \hat{\bgm}^{\mu_3 \mu_4} \hat{\bgm}^{\nu_3 \nu_4} \\
		& \phantom{\pregravfr_4^{\mu_1 \nu_1 \vert \cdots \vert \mu_4 \nu_4} \left ( p_1^\sigma, p_2^\sigma \right ) = + \left ( p_1 \cdot p_2 \right ) \bigg (} \! \! \! \! + \frac{1}{2} \hat{\bgm}^{\mu_1 \mu_2} \hat{\bgm}^{\nu_1 \mu_3} \hat{\bgm}^{\nu_2 \nu_3} \hat{\bgm}^{\mu_4 \nu_4} - \hat{\bgm}^{\mu_1 \mu_2} \hat{\bgm}^{\nu_1 \mu_3} \hat{\bgm}^{\nu_2 \mu_4} \hat{\bgm}^{\nu_3 \nu_4} \\
		& \phantom{\pregravfr_4^{\mu_1 \nu_1 \vert \cdots \vert \mu_4 \nu_4} \left ( p_1^\sigma, p_2^\sigma \right ) = + \left ( p_1 \cdot p_2 \right ) \bigg (} \! \! \! \! + \frac{1}{2} \hat{\bgm}^{\mu_1 \mu_3} \hat{\bgm}^{\nu_1 \nu_3} \hat{\bgm}^{\mu_2 \mu_4} \hat{\bgm}^{\nu_2 \nu_4} - \frac{1}{2} \hat{\bgm}^{\mu_1 \mu_3} \hat{\bgm}^{\nu_1 \mu_4} \hat{\bgm}^{\mu_2 \nu_3} \hat{\bgm}^{\nu_2 \nu_4} \\
		& \phantom{\pregravfr_4^{\mu_1 \nu_1 \vert \cdots \vert \mu_4 \nu_4} \left ( p_1^\sigma, p_2^\sigma \right ) = + \left ( p_1 \cdot p_2 \right ) \bigg (} \! \! \! \! - \frac{1}{4} \hat{\bgm}^{\mu_1 \nu_1} \hat{\bgm}^{\mu_2 \mu_3} \hat{\bgm}^{\nu_2 \nu_3} \hat{\bgm}^{\mu_4 \nu_4} + \frac{1}{2} \hat{\bgm}^{\mu_1 \nu_1} \hat{\bgm}^{\mu_2 \mu_3} \hat{\bgm}^{\nu_2 \mu_4} \hat{\bgm}^{\nu_3 \nu_4} \\
		& \phantom{\pregravfr_4^{\mu_1 \nu_1 \vert \cdots \vert \mu_4 \nu_4} \left ( p_1^\sigma, p_2^\sigma \right ) = + \left ( p_1 \cdot p_2 \right ) \bigg (} \! \! \! \! + \frac{1}{32} \hat{\bgm}^{\mu_1 \nu_1} \hat{\bgm}^{\mu_2 \nu_2} \hat{\bgm}^{\mu_3 \nu_3} \hat{\bgm}^{\mu_4 \nu_4} - \frac{1}{8} \hat{\bgm}^{\mu_1 \nu_1} \hat{\bgm}^{\mu_2 \nu_2} \hat{\bgm}^{\mu_3 \mu_4} \hat{\bgm}^{\nu_3 \nu_4} \bigg ) \Bigg \} \\
			& \phantom{\preccgravfr_4^{\mu_1 \nu_1 \vert \cdots \vert \mu_4 \nu_4} \left ( p_1^\sigma, p_2^\sigma \right ) =} + \imaginary \varkappa \Lambda \, \bigg \{ \frac{57}{16} \hat{\bgm}^{\mu_1 \nu_1} \hat{\bgm}^{\mu_2 \nu_2} \hat{\bgm}^{\mu_3 \nu_3} \hat{\bgm}^{\mu_4 \nu_4} + \frac{15}{2} \hat{\bgm}^{\mu_1 \nu_1} \hat{\bgm}^{\mu_2 \nu_3} \hat{\bgm}^{\mu_3 \nu_4} \hat{\bgm}^{\mu_4 \nu_2} \\
			& \phantom{\preccgravfr_4^{\mu_1 \nu_1 \vert \cdots \vert \mu_4 \nu_4} \left ( p_1^\sigma, p_2^\sigma \right ) = + \imaginary \varkappa \Lambda \bigg \{} - \frac{13}{8} \hat{\bgm}^{\mu_1 \nu_1} \hat{\bgm}^{\mu_2 \nu_2} \hat{\bgm}^{\mu_3 \nu_4} \hat{\bgm}^{\mu_4 \nu_3} + \frac{7}{8} \hat{\bgm}^{\mu_1 \nu_2} \hat{\bgm}^{\mu_2 \nu_1} \hat{\bgm}^{\mu_3 \nu_4} \hat{\bgm}^{\mu_4 \nu_3} \\
			& \phantom{\preccgravfr_4^{\mu_1 \nu_1 \vert \cdots \vert \mu_4 \nu_4} \left ( p_1^\sigma, p_2^\sigma \right ) = + \imaginary \varkappa \Lambda \bigg \{} - 12 \hat{\bgm}^{\mu_1 \nu_2} \hat{\bgm}^{\mu_2 \nu_3} \hat{\bgm}^{\mu_3 \nu_4} \hat{\bgm}^{\mu_4 \nu_1} \bigg \}
		\end{split}
	\end{align}
	\end{subequations}
	We remark that the three- and four-valent graviton vertex Feynman rules agree with the cited literature modulo prefactors and minus signs. Additionally, we remark that the three- and four-valent graviton-ghost Feynman rules are given in \cite[Example 4.15]{Prinz_4} and remain valid with the replacement \(\hat{\eta} \rightsquigarrow \hat{\bgm}\).
\end{exmp}

\section{BRST setup, Faddeev--Popov ghosts and symmetric ghosts}

In this section, we discuss the generalization of the results from \cite{Prinz_5,Prinz_6}: This includes the diffeomorphism BRST and anti-BRST operators, the Faddeev--Popov ghost Lagrange density as well as the symmetric ghost Lagrange density and their relation. More precisely, the diffeomorphism BRST operator \(P\) is given as the following odd vector field on the spacetime-matter bundle with graviton-ghost degree 1:
	\begin{equation}
		P := \frac{1}{\zeta} \left ( \nabla^{TM}_\mu C_\nu + \nabla^{TM}_\nu C_\mu \right ) \frac{\partial}{\partial h_{\mu \nu}} + \varkappa C^\rho \left ( \partial_\rho C^\sigma \right ) \frac{\partial}{\partial C^\sigma} + \frac{1}{\zeta} B^\sigma \frac{\partial}{\partial \overline{C}^\sigma}
	\end{equation}
	Equivalently, its action on fundamental particle fields is given as follows:
	{\allowdisplaybreaks
	\begin{subequations}
	\begin{align}
		P h_{\mu \nu} & := \frac{1}{\zeta} \left ( \nabla^{TM}_\mu C_\nu + \nabla^{TM}_\nu C_\mu \right ) \\
		P C^\rho & := \varkappa C^\sigma \left ( \partial_\sigma C^\rho \right ) \\
		P \overline{C}^\rho & := \frac{1}{\zeta} B^\rho \\
		P B^\rho & := 0 \\
		P \bgm_{\mu \nu} & := 0 \label{eqn:diffeomorphism-brst-operator-bgm}
	\end{align}
	\end{subequations}
	}%
	Additionally, we define the diffeomorphism anti-BRST operator \(\overline{P}\) as the following odd vector field on the spacetime-matter bundle with graviton-ghost degree -1:
	\begin{subequations}
	\begin{align}
		\overline{P} & := \eval{P}_{C \rightsquigarrow \overline{C}}
		\intertext{together with the following additional changes}
		\overline{P} C^\rho & := - \frac{1}{\zeta} B^\rho + \varkappa \left ( \overline{C}^\sigma \left ( \partial_\sigma C^\rho \right ) - \big ( \partial_\sigma \overline{C}^\rho \big ) C^\sigma \right ) \\
		\overline{P} \overline{C}^\rho & := \varkappa \overline{C}^\sigma \big ( \partial_\sigma \overline{C}_\rho \big ) \\
		\overline{P} B^\rho & := \varkappa \left ( \overline{C}^\sigma \left ( \partial_\sigma B_\rho \right ) - \big ( \partial_\sigma \overline{C}^\rho \big ) B^\sigma \right )
	\end{align}
	\end{subequations}
	We remark the characteristic identities \(\commutatorbig{P}{P} = \commutatorbig{P}{\overline{P}} = \commutatorbig{\overline{P}}{\overline{P}} = 0\),\footnote{We emphasize that \(\left [ \cdot , \cdot \right ]\) denotes the supercommutator, which is equivalent to the anticommutator in the previous equation.} cf.\ \cite[Proposition 3.2 and Corollary 3.8]{Prinz_5} and the references therein.

\enter

\begin{rem}
	The definition of the diffeomorphism BRST operator given above is identical to the case of a flat background metric, cf.\ \cite[Definition 3.1 and Definition 3.7]{Prinz_5} and \cite[Definition 2.2]{Prinz_6} and the references therein: The crucial point is that the background metric \(\eta_{\mu \nu}\) or \(\bgm_{\mu \nu}\) is set to be invariant with respect to \(P\), as in \eqnref{eqn:diffeomorphism-brst-operator-bgm}. This is achieved by defining its transformation into the graviton field \(h_{\mu \nu}\).
\end{rem}

\enter

\begin{col} \label{col:linearized_de_donder_gauge_fixing_fermion-cc}
	The gauge fixing Lagrange density and its accompanying Faddeev--Popov ghost Lagrange density for (effective) Quantum General Relativity with a cosmological constant read
	\begin{equation}
	\begin{split}
		\mathcal{L}_\textup{GR-\(\Lambda\)-GF} + \mathcal{L}_\textup{GR-\(\Lambda\)-Ghost} & = - \frac{1}{4 \varkappa^2 \zeta}  \bgm^{\mu \nu} \deDonder^{(1)}_\mu \deDonder^{(1)}_\nu \dif V_\bgm \\ & \phantom{=} - \frac{1}{2 \zeta} \bgm^{\rho \sigma} \overline{C}^\mu \left ( \partial_\rho \partial_\sigma C_\mu \right ) \dif V_\bgm \\ & \phantom{=} - \frac{1}{2} \bgm^{\rho \sigma} \overline{C}^\mu \left ( \partial_\mu \big ( \tensor{\Gamma}{^\nu _\rho _\sigma} C_\nu \big ) - 2 \partial_\rho \big ( \tensor{\Gamma}{^\nu _\mu _\sigma} C_\nu \big ) \right ) \dif V_\bgm
	\end{split}
	\end{equation}
	with the linearized de Donder gauge fixing functional \(\deDonder^{(1)}_\mu := \bgm^{\rho \sigma} \Gamma_{\rho \sigma \mu}\). They can be obtained from the gauge fixing fermion
	\begin{equation}
		\stigma^{(1)} := \frac{1}{2} \overline{C}^\rho \left ( \frac{1}{\varkappa} \deDonder^{(1)}_\rho + \frac{1}{2} B_\rho \right ) \dif V_\bgm \label{eqn:linearized_de_donder_gauge_fixing_fermion-cc}
	\end{equation}
	via \(P \stigma^{(1)}\).
\end{col}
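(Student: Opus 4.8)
The plan is to compute \(P \stigma^{(1)}\) directly from \eqnref{eqn:linearized_de_donder_gauge_fixing_fermion-cc}. Since the diffeomorphism BRST operator \(P\) is an odd derivation, one applies the graded Leibniz rule to the product \(\stigma^{(1)} = \frac{1}{2} \overline{C}^\rho \big( \frac{1}{\varkappa} \deDonder^{(1)}_\rho + \frac{1}{2} B_\rho \big) \dif V_\bgm\). The whole computation is the curved-background analogue of the flat-background calculation of \cite{Prinz_5,Prinz_6}; the one genuinely new ingredient is the invariance \(P \bgm_{\mu \nu} = 0\) from \eqnref{eqn:diffeomorphism-brst-operator-bgm}, which forces \(P \dif V_\bgm = 0\), so that the background volume form can be carried through the calculation untouched.

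First, using \(P \dif V_\bgm = 0\), \(P \overline{C}^\rho = \frac{1}{\zeta} B^\rho\) and \(P B^\rho = 0\) together with the graded Leibniz rule (with the sign coming from the odd degree of \(\overline{C}^\rho\)), I would obtain
\begin{equation*}
	P \stigma^{(1)} = \left( \frac{1}{2 \varkappa \zeta} B^\rho \deDonder^{(1)}_\rho + \frac{1}{4 \zeta} B^\rho B_\rho - \frac{1}{2 \varkappa} \overline{C}^\rho \big( P \deDonder^{(1)}_\rho \big) \right) \dif V_\bgm \, .
\end{equation*}
Second, I would eliminate the auxiliary Nakanishi--Lautrup field \(B\) through its algebraic equation of motion \(B^\rho = - \frac{1}{\varkappa} \bgm^{\rho \sigma} \deDonder^{(1)}_\sigma\); inserting this back collapses the first two terms to \(- \frac{1}{4 \varkappa^2 \zeta} \bgm^{\mu \nu} \deDonder^{(1)}_\mu \deDonder^{(1)}_\nu \dif V_\bgm\), i.e.\ precisely \(\mathcal{L}_\text{GR-\(\Lambda\)-GF}\). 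Third, the last term requires an explicit expression for \(P \deDonder^{(1)}_\rho\): using \(P g_{\mu \nu} = \varkappa P h_{\mu \nu} = \frac{\varkappa}{\zeta} \big( \nabla^{TM}_\mu C_\nu + \nabla^{TM}_\nu C_\mu \big)\) and the linearity of the de Donder functional in the metric perturbation, the chain rule yields a \(\bgm^{\rho \sigma} \partial_\rho \partial_\sigma C_\mu\) contribution plus contributions from the connection coefficients entering \(\nabla^{TM}\), while the mixed second derivatives \(\bgm^{\rho \sigma} \partial_\rho \partial_\mu C_\sigma\) cancel by the symmetry of \(\bgm^{\rho \sigma}\). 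Multiplying the result by \(- \frac{1}{2 \varkappa} \overline{C}^\rho \dif V_\bgm\) and relabelling indices then has to reproduce \(\mathcal{L}_\text{GR-\(\Lambda\)-Ghost}\), i.e.\ the \(\frac{1}{\zeta} \overline{C}^\mu ( \partial_\rho \partial_\sigma C_\mu )\) term together with the combination \(\partial_\mu ( \tensor{\Gamma}{^\nu _\rho _\sigma} C_\nu ) - 2 \partial_\rho ( \tensor{\Gamma}{^\nu _\mu _\sigma} C_\nu )\).

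I expect the third step to be the main obstacle: one has to fix, consistently with \cite{Prinz_5,Prinz_6}, the precise meaning of \(\nabla^{TM}\) in \(P h_{\mu \nu}\) — which background connection is used, and how the factor \(\frac{1}{\zeta}\) is distributed between the derivative and the Christoffel part — as well as the exact form of the linearized de Donder functional \(\deDonder^{(1)}_\rho\) in terms of \(h_{\mu \nu}\) and the background connection, so that every power of \(\zeta\) and every numerical coefficient lands on the stated combination. This bookkeeping is exactly what distinguishes the present curved-background case from the flat one, where the \(\tensor{\Gamma}{^\nu}\)-terms are absent; a useful cross-check is to specialise the general-metric de Donder ghost computation of \cite{Prinz_5} to first order in \(h_{\mu \nu}\). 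The remaining ingredients — the BRST algebra, the graded Leibniz rule, and the elimination of \(B\) — are routine.
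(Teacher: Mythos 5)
Your proposal is correct and follows essentially the same route as the paper, whose proof consists solely of the remark that the claim is shown analogously to \cite[Proposition 3.4]{Prinz_5} — i.e.\ precisely the computation you outline: apply the graded Leibniz rule to \(P \stigma^{(1)}\) using \(P \dif V_\bgm = 0\), eliminate the Nakanishi--Lautrup field via its algebraic equation of motion \(B_\rho = - \frac{1}{\varkappa} \deDonder^{(1)}_\rho\) to recover the gauge fixing term, and evaluate \(P \deDonder^{(1)}_\rho\) to produce the ghost terms. Your intermediate expression for \(P \stigma^{(1)}\) and the collapse of the \(B\)-dependent terms to \(- \frac{1}{4 \varkappa^2 \zeta} \bgm^{\mu \nu} \deDonder^{(1)}_\mu \deDonder^{(1)}_\nu \dif V_\bgm\) check out, and the remaining bookkeeping of the \(\zeta\)-powers and Christoffel terms that you flag is exactly the content deferred to the cited flat-background proof.
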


\begin{proof}
	This can be shown analogously to the proof of \cite[Proposition 3.4]{Prinz_5}.
\end{proof}

\enter

\begin{col} \label{col:symmetric-ghost-qgrcc}
	The gauge fixing Lagrange density and its accompanying symmetric ghost Lagrange density for (effective) Quantum General Relativity with a cosmological constant read
	\begin{equation}
	\begin{split}
		\mathcal{L}_\textup{GR-\(\Lambda\)-GF-Ghost} & = - \frac{1}{2 \zeta} \left ( \frac{1}{2 \varkappa^2} \bgm^{\mu \nu} \deDonder^{(1)}_\mu \deDonder^{(1)}_\nu + \bgm^{\mu \nu} \big ( \partial_\mu \overline{C}^\rho \big ) \big ( \partial_\nu C_\rho \big ) \right ) \dif V_\bgm \\
		& \phantom{:=} - \frac{1}{2} \bgm^{\mu \nu} \overline{C}^\rho \left ( \frac{1}{2} \partial_\rho \big ( \tensor{\Gamma}{^\sigma _\mu _\nu} C_\sigma \big ) - \partial_\mu \big ( \tensor{\Gamma}{^\sigma _\rho _\nu} C_\sigma \big ) \right ) \dif V_\bgm \\
		& \phantom{:=} + \frac{1}{2} \bgm^{\mu \nu} \left ( \frac{1}{2} \partial_\rho \big ( \tensor{\Gamma}{^\sigma _\mu _\nu} \overline{C}_\sigma \big ) - \partial_\mu \big ( \tensor{\Gamma}{^\sigma _\rho _\nu} \overline{C}_\sigma \big ) \right ) C^\rho \dif V_\bgm \\
		& \phantom{:=} - \frac{\varkappa^2 \zeta}{8} \bgm_{\mu \nu} \left ( \overline{C}^\rho \big ( \partial_\rho \overline{C}^\mu \big ) \right ) \left ( C^\sigma \big ( \partial_\sigma C^\nu \big ) \right ) \dif V_\bgm
	\end{split}
	\end{equation}
	with the linearized de Donder gauge fixing functional \(\deDonder^{(1)}_\mu := \bgm^{\rho \sigma} \Gamma_{\rho \sigma \mu}\). It can be obtained from the gauge fixing boson
	\begin{equation}
		F := \frac{\zeta}{4} \left ( \frac{1}{\varkappa} \bgm^{\mu \nu} h_{\mu \nu} + \overline{C}^\rho C_\rho \right ) \dif V_\bgm
	\end{equation}
	via \(\mathcal{P} F\), where \(\mathcal{P} := P \circ \overline{P}\) is the diffeomorphism super-BRST operator, cf.\ \cite[Definition 2.5]{Prinz_6}.
\end{col}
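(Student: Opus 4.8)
The plan is to verify the claimed identity by computing $\mathcal{P} F = P\bigl(\overline{P} F\bigr)$ term by term, using the explicit action of $P$ and $\overline{P}$ on the fundamental fields recorded above and mirroring the flat-background argument of \cite{Prinz_6} (cf.\ also \cite{Prinz_5}). The structural fact that makes this work uniformly for a general background is that $P\bgm_{\mu\nu} = 0$ and, since this expression contains no $C$-field, the defining substitution $C \rightsquigarrow \overline{C}$ gives $\overline{P}\,\bgm_{\mu\nu} = 0$ as well; hence $\bgm^{\mu\nu}$, $\bgm_{\mu\nu}$, the Levi-Civita connection of $\bgm$ and the volume form $\dif V_\bgm$ all commute past both operators, and integration by parts inside $\dif V_\bgm$ is available throughout. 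The only genuinely new feature relative to \cite{Prinz_6} is that the covariant derivative $\nabla^{TM}$ appearing in $P h_{\mu\nu}$ and in $\overline{P} h_{\mu\nu}$ carries $h$-dependence, so that acting with $P$ on a Christoffel symbol is non-trivial; this is precisely the mechanism that generates the $\tensor{\Gamma}{^\sigma _\mu _\nu}$-terms in the statement.

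First I would compute $\overline{P} F$. The piece $\tfrac{1}{\varkappa}\bgm^{\mu\nu}h_{\mu\nu}$ produces $\tfrac{2}{\varkappa\zeta}\bgm^{\mu\nu}\nabla^{TM}_\mu\overline{C}_\nu$, which expands into a $\bgm^{\mu\nu}(\partial_\mu\overline{C}_\nu)$ term together with a Christoffel contraction; the piece $\overline{C}^\rho C_\rho$ produces, via the graded Leibniz rule applied to $\overline{P}\overline{C}^\rho$ and $\overline{P}C_\rho$, the cubic terms $\varkappa\,\overline{C}\,\overline{C}\,C$ (with and without a derivative on an antighost) and the auxiliary-field term $\tfrac1\zeta\overline{C}^\rho B_\rho$. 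Then I would apply $P$ to $\overline{P} F$: acting on $\overline{C}$ everywhere yields $\tfrac1\zeta B$; acting on the linearized Christoffel inside $\nabla^{TM}_\mu\overline{C}_\nu$ yields, after one integration by parts, the symmetric ghost kinetic term $-\tfrac{1}{2\zeta}\bgm^{\mu\nu}(\partial_\mu\overline{C}^\rho)(\partial_\nu C_\rho)$ together with the $\tfrac12\partial_\rho(\tensor{\Gamma}{^\sigma _\mu _\nu}C_\sigma) - \partial_\mu(\tensor{\Gamma}{^\sigma _\rho _\nu}C_\sigma)$ cross-structures on the second and third lines; and acting with $P$ on the $C$ (and using $\overline{P}\overline{C}^\rho \propto \overline{C}\,\partial\overline{C}$) inside the cubic $\overline{C}\,\overline{C}\,C$ terms yields the quartic ghost term $\propto \varkappa^2\,\bgm_{\mu\nu}(\overline{C}^\rho\partial_\rho\overline{C}^\mu)(C^\sigma\partial_\sigma C^\nu)$, up to higher-derivative remainders that must be absorbed by further integrations by parts.

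Next I would eliminate the Nakanishi--Lautrup field $B^\rho$ through its algebraic equation of motion. Consistently with \colref{col:linearized_de_donder_gauge_fixing_fermion-cc}, this sets $B_\rho = -\tfrac1\varkappa\deDonder^{(1)}_\rho$; substituting into the $B$-bilinear and $B$--$\deDonder^{(1)}$ contributions collapses them to the gauge fixing term $-\tfrac{1}{4\varkappa^2\zeta}\bgm^{\mu\nu}\deDonder^{(1)}_\mu\deDonder^{(1)}_\nu\,\dif V_\bgm$, and any residual $B$--$\overline{C}C$ pieces either cancel pairwise or are removed by the same substitution. A final reorganization by integration by parts inside $\dif V_\bgm$ brings the Christoffel--ghost contributions into the symmetric form displayed in the statement, completing the identification of $\mathcal{P} F$ with $\mathcal{L}_\textup{GR-\(\Lambda\)-GF-Ghost}$.

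I expect the main obstacle to be the bookkeeping rather than any conceptual point: the iterated graded Leibniz rule produces many sign-sensitive terms, and --- unlike in the flat case --- lowering or raising a ghost index through a partial derivative, e.g.\ $\bgm_{\rho\lambda}\partial_\sigma C^\lambda = \partial_\sigma C_\rho - (\partial_\sigma\bgm_{\rho\lambda})C^\lambda$, generates $\partial\bgm$ remainders that have to be reassembled into exactly the Christoffel symbols appearing in the statement, with no spurious leftovers. Getting the precise numerical coefficients is the delicate endpoint of this --- in particular the $-\varkappa^2\zeta/8$ weight of the quartic ghost term, which receives contributions from several of the cubic pieces above, and the relative $\tfrac12$ versus $1$ weighting of the two $\Gamma$-ghost structures. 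The nilpotency and graded-commutativity identities $\commutatorbig{P}{P} = \commutatorbig{P}{\overline{P}} = \commutatorbig{\overline{P}}{\overline{P}} = 0$ recorded above serve as a useful internal consistency check on these coefficients.
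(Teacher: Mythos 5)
Your proposal is correct and takes essentially the same route as the paper, whose entire proof is a one-line deferral to the analogous flat-background computation of \cite[Proposition 3.1]{Prinz_6}: a direct term-by-term evaluation of \(\mathcal{P} F = P\big(\overline{P} F\big)\) via the graded Leibniz rule, elimination of the Nakanishi--Lautrup field through its algebraic equation of motion, and reorganization by integration by parts inside \(\dif V_\bgm\). Your observation that \(P \bgm_{\mu \nu} = \overline{P} \bgm_{\mu \nu} = 0\) is the structural fact allowing the flat-background argument to carry over --- with the \(h\)-dependence of the Christoffel symbols generating the new \(\Gamma\)-ghost terms --- is precisely the point the paper's ``analogously'' implicitly relies on.
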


\begin{proof}
	This can be shown analogously to the proof of \cite[Proposition 3.1]{Prinz_6}.
\end{proof}

\enter

\begin{col} \label{col:symmetric-ghost-homotopy-qgrcc}
	We obtain the following homotopy in \(\lambda \in [0, 1]\) between the Faddeev--Popov construction \(\lambda = 0\), the symmetric setup \(\lambda = \textfrac{1}{2}\) and the opposed Faddeev--Popov construction \(\lambda = 1\):
	\begin{equation}
	\begin{split}
	\mathcal{L}_\textup{GR-\(\Lambda\)-GF-Ghost} \left ( \lambda \right ) & = - \frac{1}{2 \zeta} \left ( \frac{1}{2 \varkappa^2} \bgm^{\mu \nu} \deDonder^{(1)}_\mu \deDonder^{(1)}_\nu + \bgm^{\mu \nu} \big ( \partial_\mu \overline{C}^\rho \big ) \big ( \partial_\nu C_\rho \big ) \right ) \dif V_\bgm \\
	& \phantom{=} - \left ( 1 - \lambda \right ) \bgm^{\mu \nu} \overline{C}^\rho \left ( \frac{1}{2} \partial_\rho \big ( \tensor{\Gamma}{^\sigma _\mu _\nu} C_\sigma \big ) - \partial_\mu \big ( \tensor{\Gamma}{^\sigma _\rho _\nu} C_\sigma \big ) \right ) \dif V_\bgm \\
	& \phantom{=} + \lambda \bgm^{\mu \nu} \left ( \frac{1}{2} \partial_\rho \big ( \tensor{\Gamma}{^\sigma _\mu _\nu} \overline{C}_\sigma \big ) - \partial_\mu \big ( \tensor{\Gamma}{^\sigma _\rho _\nu} \overline{C}_\sigma \big ) \right ) C^\rho \dif V_\bgm \\
	& \phantom{=} - \frac{\varkappa^2 \zeta}{8} \lambda \left ( 1 - \lambda \right ) \bgm_{\mu \nu} \left ( \overline{C}^\rho \big ( \partial_\rho \overline{C}^\mu \big ) \right ) \left ( C^\sigma \big ( \partial_\sigma C^\nu \big ) \right ) \dif V_\bgm
	\end{split}
	\end{equation}
\end{col}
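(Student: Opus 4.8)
The plan is to adapt the flat-background homotopy of \cite{Prinz_5,Prinz_6} by the substitution \(\hat{\eta} \rightsquigarrow \hat{\bgm}\), realizing \(\mathcal{L}_\textup{GR-\(\Lambda\)-GF-Ghost} \! \left ( \lambda \right )\) as a \(\lambda\)-weighted combination of gauge-fixing data that is exact with respect to the diffeomorphism BRST operator \(P\), its anti-BRST partner \(\overline{P}\), and the super-BRST operator \(\mathcal{P} = P \circ \overline{P}\). The structural input that makes such an interpolation possible is the triple of characteristic identities \(\commutatorbig{P}{P} = \commutatorbig{P}{\overline{P}} = \commutatorbig{\overline{P}}{\overline{P}} = 0\): they imply that \(P_\lambda := \left ( 1 - \lambda \right ) P + \lambda \overline{P}\) squares to zero for every \(\lambda \in [0,1]\), so each member of the family is governed by a nilpotent differential, which is the setting of the gauge-independence discussion of \cite{Prinz_6}. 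As a preliminary step I would record the ``opposed'' Faddeev--Popov construction, i.e.\ the direct analogue of \colref{col:linearized_de_donder_gauge_fixing_fermion-cc} with \(P\) and \(\stigma^{(1)}\) replaced by \(\overline{P}\) and the anti-ghost gauge-fixing fermion \(\overline{\stigma}^{(1)} := \eval{\stigma^{(1)}}_{C \rightsquigarrow \overline{C}}\); this yields the ghost Lagrange density of \colref{col:linearized_de_donder_gauge_fixing_fermion-cc} with \(C\) and \(\overline{C}\) interchanged, which is the \(\lambda = 1\) endpoint.

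With both endpoint constructions available, the core of the argument is a finite verification. One first observes that in the displayed \(\mathcal{L}_\textup{GR-\(\Lambda\)-GF-Ghost} \! \left ( \lambda \right )\) the quadratic part --- the linearized de Donder monomial \(\bgm^{\mu \nu} \deDonder^{(1)}_\mu \deDonder^{(1)}_\nu\) together with the ghost kinetic monomial \(\bgm^{\mu \nu} \big ( \partial_\mu \overline{C}^\rho \big ) \big ( \partial_\nu C_\rho \big )\) --- is \(\lambda\)-independent, so that only the two Christoffel--ghost cubic monomials and the quartic ghost monomial carry \(\lambda\)-dependence, with weights \(1 - \lambda\), \(\lambda\) and \(\lambda \left ( 1 - \lambda \right )\) respectively. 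Then one checks the three distinguished values. At \(\lambda = 0\) the formula must reduce to \(\mathcal{L}_\textup{GR-\(\Lambda\)-GF} + \mathcal{L}_\textup{GR-\(\Lambda\)-Ghost}\) of \colref{col:linearized_de_donder_gauge_fixing_fermion-cc}; here one identifies the two presentations \(\bgm^{\mu \nu} \big ( \partial_\mu \overline{C}^\rho \big ) \big ( \partial_\nu C_\rho \big ) \dif V_\bgm\) and \(- \bgm^{\rho \sigma} \overline{C}^\mu \big ( \partial_\rho \partial_\sigma C_\mu \big ) \dif V_\bgm\) of the ghost kinetic term by an integration by parts against the background volume form. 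At \(\lambda = 1\) the formula must reduce to the opposed construction of the preliminary step, which is immediate from its \(C \leftrightarrow \overline{C}\) symmetry together with the same integration by parts. At \(\lambda = \tfrac12\) it must reduce, modulo total derivatives, to \(\mathcal{L}_\textup{GR-\(\Lambda\)-GF-Ghost}\) of \colref{col:symmetric-ghost-qgrcc}, where the two Christoffel--ghost monomials acquire their common coefficient and the quartic monomial its coefficient through the mixed action \(P \circ \overline{P}\).

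I expect the quartic ghost monomial to be the main obstacle. Acting with \(P\), \(\overline{P}\) or \(\mathcal{P}\) on the relevant gauge-fixing data produces, through the nonlinear variations \(P C^\rho = \varkappa C^\sigma \big ( \partial_\sigma C^\rho \big )\) and its anti-ghost counterpart, several quartic contributions whose index structure must be shown to collapse onto the single monomial \(\bgm_{\mu \nu} \big ( \overline{C}^\rho \big ( \partial_\rho \overline{C}^\mu \big ) \big ) \big ( C^\sigma \big ( \partial_\sigma C^\nu \big ) \big )\) after a total-derivative rearrangement of a cubic-in-ghost expression, with due care for the sign generated by the fermionic reordering. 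In addition one must confirm that the \(\lambda\)-dependence of this term is genuinely of the form \(\lambda \left ( 1 - \lambda \right )\), with no surviving \(\lambda^2\) piece, which is what pins down its coefficient at \(\lambda = \tfrac12\). All of this is routine but tedious bookkeeping; it is the only part of the proof that does not follow formally from \cite{Prinz_5,Prinz_6}, and the cosmological constant \(\Lambda\) plays no role in it, since the \(\Lambda\)-monomial contributes only to \(\mathcal{L}_\textup{GR-\(\Lambda\)}\) and not to the gauge-fixing and ghost sector treated here.
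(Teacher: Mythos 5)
Your proposal is correct and takes essentially the same route as the paper, whose entire proof consists of the remark that the claim follows analogously to the flat-background homotopy of \cite[Theorem 3.2]{Prinz_6}: your substitution \(\hat{\eta} \rightsquigarrow \hat{\bgm}\), the verification at the three distinguished values of \(\lambda\) via the fermion \(\stigma^{(1)}\), its opposed counterpart and the boson \(F\), and the isolation of the quartic ghost monomial with its \(\lambda \left ( 1 - \lambda \right )\) weight as the only genuinely computational step are precisely what that citation leaves implicit. One caution for your \(\lambda = \tfrac{1}{2}\) check: the displayed quartic coefficient \(- \tfrac{\varkappa^2 \zeta}{8} \lambda \left ( 1 - \lambda \right )\) evaluates there to \(- \tfrac{\varkappa^2 \zeta}{32}\), which differs by a factor of \(4\) from the \(- \tfrac{\varkappa^2 \zeta}{8}\) appearing in Corollary~\ref{col:symmetric-ghost-qgrcc}, so your verification will have to reconcile the normalization of that single term (the kinetic and cubic terms match exactly), most plausibly by correcting one of the two displayed prefactors.
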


\begin{proof}
	This can be shown analogously to the proof of \cite[Theorem 3.2]{Prinz_6}.
\end{proof}

\section{Transversality}

Finally, we study the transversality of (effective) Quantum General Relativity with a cosmological constant as a generalization of \cite{Prinz_7}. Therein, in Definition 3.14, we introduced the transversal structure of (effective) Quantum General Relativity with a de Donder gauge fixing (QGR) as the set \(\mathcal{T}_\text{QGR} := \setbig{\mathbbit{L}, \mathbbit{I}, \mathbbit{T}}\). These three operators are the longitudinal, identical and transversal projection tensors, given in our generalized setting as follows:\footnote{From now on, we omit the hat for Fourier transformed quantities to improve readability.}
\begin{subequations} \label{eqn:defn_projection_tensors_qgrcc}
\begin{align}
	\mathbbit{L}^{\rho \sigma}_{\mu \nu} & := \frac{1}{2 p^2} \left ( \delta^\rho_\mu p^\sigma p_\nu + \delta^\sigma_\mu p^\rho p_\nu + \delta^\rho_\nu p^\sigma p_\mu + \delta^\sigma_\nu p^\rho p_\mu - 2 \bgm^{\rho \sigma} p_\mu p_\nu \right ) \, , \\
	\mathbbit{I} \mspace{2mu} ^{\rho \sigma}_{\mu \nu} & := \frac{1}{2} \left ( \delta^\rho_\mu \delta^\sigma_\nu + \delta^\sigma_\mu \delta^\rho_\nu \right )
	\intertext{and}
	\mathbbit{T} \mspace{2mu} ^{\rho \sigma}_{\mu \nu} & := \mathbbit{I} \mspace{2mu} ^{\rho \sigma}_{\mu \nu} - \mathbbit{L}^{\rho \sigma}_{\mu \nu} \, ,
\end{align}
\end{subequations}
where we have set \(p^2 := \bgm_{\mu \nu} p^\mu p^\nu\). Lorentz indices on \(\mathbbit{L}\), \(\mathbbit{I}\) and \(\mathbbit{T}\) are raised and lowered with the metric \(\mathbbit{G}\), defined via\footnote{This metric is known in the literature as the de Witt metric \cite{DeWitt_I}, cf.\ \cite{Giulini_Kiefer}. The reason for the asymmetric definition concerning the factor \(\textfrac{1}{4}\) is motivated by \cite[Equations (57)]{Prinz_7}.}
\begin{subequations} \label{eqn:defn_metric_ts-qgrcc}
\begin{align}
	\mathbbit{G}_{\mu \nu \rho \sigma} & := \frac{1}{p^2} \left ( \bgm_{\mu \rho} \bgm_{\nu \sigma} + \bgm_{\mu \sigma} \bgm_{\nu \rho} - \bgm_{\mu \nu} \bgm_{\rho \sigma} \right )
	\intertext{and its inverse}
	\mathbbit{G}^{\mu \nu \rho \sigma} & := \frac{p^2}{4} \left ( \bgm^{\mu \rho} \bgm^{\nu \sigma} + \bgm^{\mu \sigma} \bgm^{\nu \rho} - \bgm^{\mu \nu} \bgm^{\rho \sigma} \right ) \, .
\end{align}
\end{subequations}
All these tensors share interesting relations, which are listed in \cite[Subsection 3.2]{Prinz_7}.

\enter

\begin{thm} \label{thm:feynman-rule-graviton-propagator-cc-lt-decomposition}
	The Feynman rule of the graviton propagator with a cosmological constant can be written as follows:
	\begin{equation}
		\Phi \left ( \cgreen{p-graviton} \right ) = - \frac{2 \imaginary p^2}{p^2 + \Lambda + \imaginary \epsilon} \left ( \mathbbit{G}_{\mu \nu \rho \sigma} - \left ( \frac{1 - \zeta}{p^2 + \zeta \Lambda} \right ) \mathbbit{L}_{\mu \nu \rho \sigma} \right ) \label{eqn:decomposition_graviton_propagator_cc}
	\end{equation}
	In particular, we obtain the following two identities:
	\begin{subequations}
	\begin{align}
		\eval{\Phi \left ( \cgreen{p-graviton} \right )}_{\zeta = 0} = - \frac{2 \imaginary p^2}{p^2 + \Lambda + \imaginary \epsilon} \mathbbit{T}_{\mu \nu \rho \sigma}
		\intertext{and}
		\eval{\Phi \left ( \cgreen{p-graviton} \right )}_{\zeta = 1} = - \frac{2 \imaginary p^2}{p^2 + \Lambda + \imaginary \epsilon} \mathbbit{G}_{\mu \nu \rho \sigma}
	\end{align}
	\end{subequations}
\end{thm}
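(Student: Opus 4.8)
The plan is to start from the explicit propagator formula in \eqnref{eqn:grav-prop-cc} of \thmref{thm:grav-prop-cc} and simply rewrite it in terms of the projection tensors \(\mathbbit{G}\) and \(\mathbbit{L}\) defined in \eqnsref{eqn:defn_metric_ts-qgrcc} and \eqnref{eqn:defn_projection_tensors_qgrcc}. First I would match the \(\Lambda\)-independent tensor structure: the combination \(\bgm_{\mu_1 \mu_2} \bgm_{\nu_1 \nu_2} + \bgm_{\mu_1 \nu_2} \bgm_{\nu_1 \mu_2} - \bgm_{\mu_1 \nu_1} \bgm_{\mu_2 \nu_2}\) appearing in the first bracket of \eqnref{eqn:grav-prop-cc} is, by definition, exactly \(p^2 \, \mathbbit{G}_{\mu_1 \nu_1 \mu_2 \nu_2}\). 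This immediately accounts for the overall prefactor \(- 2 \imaginary p^2 / (p^2 + \Lambda + \imaginary \epsilon)\) multiplying \(\mathbbit{G}_{\mu \nu \rho \sigma}\) in \eqnref{eqn:decomposition_graviton_propagator_cc}, with the convention that indices on \(\mathbbit{G}\), \(\mathbbit{I}\), \(\mathbbit{T}\) are raised and lowered with \(\mathbbit{G}\) (so that \(\mathbbit{G}_{\mu \nu \rho \sigma}\) with lower paired indices is the natural object here).

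Next I would handle the longitudinal piece. The second bracket of \eqnref{eqn:grav-prop-cc} carries the momentum-dependent structure \(\bgm_{\mu_1 \mu_2} p_{\nu_1} p_{\nu_2} + \bgm_{\mu_1 \nu_2} p_{\nu_1} p_{\mu_2} + \bgm_{\nu_1 \mu_2} p_{\mu_1} p_{\nu_2} + \bgm_{\nu_1 \nu_2} p_{\mu_1} p_{\mu_2}\), multiplied by \((1 - \zeta)/(p^2 + \zeta \Lambda)\). Comparing with the definition of \(\mathbbit{L}^{\rho \sigma}_{\mu \nu}\), one sees that lowering the upper index pair with \(\bgm\) and accounting for the \(1/(2p^2)\) normalization plus the \(-2 \bgm^{\rho \sigma} p_\mu p_\nu\) trace term, the appropriately index-lowered \(\mathbbit{L}_{\mu \nu \rho \sigma}\) reproduces precisely this four-term symmetric combination up to an overall scalar factor. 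The bookkeeping here is just tracking how the metric contractions \(\mathbbit{G}_{\mu \nu \alpha \beta} \mathbbit{L}^{\alpha \beta}_{\phantom{\alpha \beta} \rho \sigma}\) collapse the trace-part of \(\mathbbit{L}\) (since \(\mathbbit{G}\) contracted against a pure-trace structure \(\bgm_{\mu \nu} p_\rho p_\sigma\) kills it, by the relations in \cite[Subsection 3.2]{Prinz_7}); this is where I expect the only real risk of a sign or factor-of-two slip. Once the coefficient is confirmed to be \(- 2 \imaginary p^2/(p^2 + \Lambda + \imaginary \epsilon) \cdot (1 - \zeta)/(p^2 + \zeta \Lambda)\), the main formula \eqnref{eqn:decomposition_graviton_propagator_cc} follows.

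Finally, the two special cases are immediate corollaries: at \(\zeta = 0\) the bracket becomes \(\mathbbit{G}_{\mu \nu \rho \sigma} - \mathbbit{L}_{\mu \nu \rho \sigma}\), and since \(\mathbbit{I} = \mathbbit{G}\) as operators on the relevant subspace (i.e.\ with indices raised/lowered by \(\mathbbit{G}\), the identity tensor \(\mathbbit{I}\) corresponds to \(\mathbbit{G}_{\mu \nu \rho \sigma}\)), the definition \(\mathbbit{T} = \mathbbit{I} - \mathbbit{L}\) gives \(\mathbbit{G}_{\mu \nu \rho \sigma} - \mathbbit{L}_{\mu \nu \rho \sigma} = \mathbbit{T}_{\mu \nu \rho \sigma}\); at \(\zeta = 1\) the longitudinal term vanishes identically and only \(\mathbbit{G}_{\mu \nu \rho \sigma}\) survives. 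The main obstacle, as noted, is purely one of careful index juggling with the asymmetric de Witt metric normalization (the factor \(\textfrac{1}{4}\) in \(\mathbbit{G}^{\mu \nu \rho \sigma}\)) — there is no conceptual difficulty, but the factors of \(p^2\), \(2\), and \(4\) distributed among \(\mathbbit{G}\), \(\mathbbit{G}^{-1}\), and \(\mathbbit{L}\) must be tracked consistently with the conventions of \cite[Equations (57)]{Prinz_7}.
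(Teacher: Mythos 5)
Your plan is the same as the paper's: the proof given there is a one-line reduction to \thmref{thm:grav-prop-cc} together with the definitions reproduced in \eqnsaref{eqn:defn_projection_tensors_qgrcc}{eqn:defn_metric_ts-qgrcc}, which is exactly the substitution you describe. Your identifications \(\bgm_{\mu_1\mu_2}\bgm_{\nu_1\nu_2}+\bgm_{\mu_1\nu_2}\bgm_{\nu_1\mu_2}-\bgm_{\mu_1\nu_1}\bgm_{\mu_2\nu_2}=p^2\,\mathbbit{G}_{\mu_1\nu_1\mu_2\nu_2}\) and \(\mathbbit{I}_{\mu\nu\rho\sigma}=\mathbbit{G}_{\mu\nu\rho\sigma}\) are correct, as is the \(\zeta=1\) case.

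The step you explicitly defer --- ``once the coefficient is confirmed to be \((1-\zeta)/(p^2+\zeta\Lambda)\)'' --- is, however, the entire content of the claim, and you cannot simply assert that it comes out as printed. Carrying out your own recipe with the stated conventions (lowering the upper index pair of \(\mathbbit{L}^{\rho\sigma}_{\mu\nu}\) with \(\mathbbit{G}\)) gives \(\mathbbit{L}_{\mu\nu\rho\sigma}=\tfrac{1}{p^4}\left(\bgm_{\mu\rho}p_\nu p_\sigma+\bgm_{\nu\rho}p_\mu p_\sigma+\bgm_{\mu\sigma}p_\nu p_\rho+\bgm_{\nu\sigma}p_\mu p_\rho\right)\), so the four-term structure in \eqnref{eqn:grav-prop-cc} equals \(p^4\,\mathbbit{L}_{\mu_1\nu_1\mu_2\nu_2}\) rather than \(p^2\,\mathbbit{L}_{\mu_1\nu_1\mu_2\nu_2}\), and the longitudinal coefficient one actually obtains is \((1-\zeta)\,p^2/(p^2+\zeta\Lambda)\). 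That extra \(p^2\) is moreover what the corollaries force: setting \(\zeta=0\) in the printed \eqnref{eqn:decomposition_graviton_propagator_cc} yields \(\mathbbit{G}_{\mu\nu\rho\sigma}-\tfrac{1}{p^2}\mathbbit{L}_{\mu\nu\rho\sigma}\), which is not \(\mathbbit{T}_{\mu\nu\rho\sigma}=\mathbbit{G}_{\mu\nu\rho\sigma}-\mathbbit{L}_{\mu\nu\rho\sigma}\). So either the displayed formula carries a typographical factor of \(p^2\) or the normalization of the lowered \(\mathbbit{L}\) is meant differently from the literal reading of \eqnsaref{eqn:defn_projection_tensors_qgrcc}{eqn:defn_metric_ts-qgrcc}; a complete proof must resolve this rather than take the printed coefficient on faith. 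A second, more minor point: your stated mechanism that ``\(\mathbbit{G}\) contracted against a pure-trace structure kills it'' is false, since \(\mathbbit{G}_{\mu\nu\alpha\beta}\,\bgm^{\alpha\beta}=-\tfrac{2}{p^2}\bgm_{\mu\nu}\neq 0\); the pure-trace contributions drop out of \(\mathbbit{G}_{\mu\nu\alpha\beta}\mathbbit{L}^{\alpha\beta}_{\rho\sigma}\) only through a cancellation between the \(-2\bgm^{\alpha\beta}p_\rho p_\sigma\) term of \(\mathbbit{L}\) and the \(-\bgm_{\mu\nu}\bgm_{\alpha\beta}\) term of \(\mathbbit{G}\) acting on the symmetric part. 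The conclusion you need is correct, but the justification as written would not survive scrutiny.
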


\begin{proof}
	This follows directly from \thmref{thm:grav-prop-cc} and \cite[Definition 3.14]{Prinz_7}, reproduced in \eqnsaref{eqn:defn_projection_tensors_qgrcc}{eqn:defn_metric_ts-qgrcc}.
\end{proof}

\enter

\begin{rem}
	In particular, we see from \thmref{thm:feynman-rule-graviton-propagator-cc-lt-decomposition} that the de Donder gauge fixing is not the optimal gauge fixing condition for (effective) Quantum General Relativity with a cosmological constant, cf.\ \cite[Definition 3.2]{Prinz_7}. This comes from the fact that the cosmological constant behaves in the propagator like an Euclidean mass term and thus alters the longitudinal and transversal modes. Interestingly, the transversal behavior for the two cases \(\zeta = 0\) and \(\zeta = 1\) agrees with the graviton propagator without a cosmological constant, cf.\ \cite[Theorem 3.21]{Prinz_7}.
\end{rem}

\enter

\begin{thm} \label{thm:three-valent-contraction-identities-qgrcc}
	The Feynman rule of the three-valent graviton vertex with a cosmological constant does not satisfy the contraction identity, i.e.\ we obtain:
	\begin{equation}
		\Phi \left ( \bbsL \tcgreen{v-gravitontriple} _{\bbsL}^{\bbsL} \right ) \not \simeq_\textup{MC} 0 \label{eqn:contraction_v-triple-graviton-cc} \, ,
	\end{equation}
	where \(\simeq_\textup{MC}\) indicates equality modulo momentum conservation.
\end{thm}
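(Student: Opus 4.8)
The plan is to split off the cosmological-constant part of the three-valent vertex and to show that it alone is responsible for the failure of the identity. By \thmref{thm:grav-fr-cc} the vertex decomposes additively as $\ccgravfr_3 = \gravfr_3 + \tfrac{\imaginary \Lambda}{\gcoupling^2} \mathfrak{V}_3$, where $\gravfr_3$ is the vertex at $\Lambda = 0$ and $\mathfrak{V}_3$ is the Feynman rule of the Riemannian volume form, which is \emph{momentum-independent}. Since contracting each of the three external legs with the longitudinal projector $\bbL$ is linear in the vertex tensor, the left-hand side of \eqnref{eqn:contraction_v-triple-graviton-cc} is the analogous $\bbL$-contraction of $\gravfr_3$ plus $\tfrac{\imaginary \Lambda}{\gcoupling^2}$ times that of $\mathfrak{V}_3$. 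The first term is $\simeq_\textup{MC} 0$, as this is precisely the three-valent contraction identity for $\Lambda = 0$ established in \cite{Prinz_7}, which carries over verbatim to a general background because both the vertex and the projection tensors in \eqnref{eqn:defn_projection_tensors_qgrcc} arise from the substitution $\eta \rightsquigarrow \bgm$. Hence it remains to prove that the $\bbL$-contraction of $\mathfrak{V}_3$ is \emph{not} $\simeq_\textup{MC} 0$.

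For this I would use the explicit cosmological-constant part of $\preccgravfr_3$ given in \exref{exmp:grav-fr-cc}, which up to an overall nonzero constant equals $\gcoupling \Lambda$ times the symmetrisation (over $\mu_i \leftrightarrow \nu_i$ and over $s \in S_3$) of the combination $\tfrac18 \bgm^{\mu_1 \nu_1} \bgm^{\mu_2 \nu_2} \bgm^{\mu_3 \nu_3} - \tfrac14 \bgm^{\mu_1 \nu_1} \bgm^{\mu_2 \nu_3} \bgm^{\mu_3 \nu_2} + \bgm^{\mu_1 \nu_2} \bgm^{\mu_2 \nu_3} \bgm^{\mu_3 \nu_1}$. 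Since this tensor carries no momenta, every momentum in the contracted object is supplied by the projectors $\bbL(p_i)$. The key step is to isolate one tensor structure in the three free index pairs that no other structure can be turned into under momentum conservation; the natural candidate is the momentum-free structure $\bgm_{\rho_1 \sigma_1} \bgm_{\rho_2 \sigma_2} \bgm_{\rho_3 \sigma_3}$. It is produced only by the ``trace'' piece $- \bgm^{\rho \sigma} p_\mu p_\nu / p^2$ of each of the three $\bbL(p_i)$; performing those contractions with $\bgm^{\mu \nu} p_{i,\mu} p_{i,\nu} = p_i^2$ on a ``diagonal'' index pair and $\bgm^{\mu \nu} p_{i,\mu} p_{j,\nu} = p_i \cdot p_j$ on a cross-linked one, its coefficient comes out as $\gcoupling \Lambda$ times a rational function of the momentum invariants of the schematic form
\[
	c_0 + c_1 \sum_{i < j} \frac{\left ( p_i \cdot p_j \right )^2}{p_i^2 \, p_j^2} + c_2 \, \frac{\left ( p_1 \cdot p_2 \right ) \left ( p_2 \cdot p_3 \right ) \left ( p_3 \cdot p_1 \right )}{p_1^2 \, p_2^2 \, p_3^2} \, ,
\]
in which $c_0$ is a nonzero rational number — it comes solely from the fully diagonal monomial $\bgm^{\mu_1 \nu_1} \bgm^{\mu_2 \nu_2} \bgm^{\mu_3 \nu_3}$, the only one of the three whose $\bbL$-contractions are all momentum-free, since the other two stay linked across legs under every relabelling — whereas the $c_1$- and $c_2$-terms are genuinely non-constant functions on the surface $p_1 + p_2 + p_3 = 0$.

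A nonzero constant plus a non-constant function is never identically zero, so the coefficient of $\bgm_{\rho_1 \sigma_1} \bgm_{\rho_2 \sigma_2} \bgm_{\rho_3 \sigma_3}$ does not vanish modulo momentum conservation; as that structure cannot be produced from the remaining, momentum-carrying structures by imposing $p_1 + p_2 + p_3 = 0$, the $\bbL$-contraction of $\ccgravfr_3$ is $\not\simeq_\textup{MC} 0$, with an obstruction of order $\Lambda$ that hence disappears at $\Lambda = 0$, in accordance with \cite{Prinz_7}. A computationally lighter alternative is to evaluate the $\bbL$-contraction at a single generic momentum configuration obeying $p_1 + p_2 + p_3 = 0$ and exhibit one nonzero component. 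I expect the main obstacle to lie in the combinatorics of the double symmetrisation $\sum_{\mu_i \leftrightarrow \nu_i} \sum_{s \in S_3}$, which must be handled carefully enough to be sure that $c_0 \neq 0$ really holds and is not cancelled by contributions dropped when projecting onto the single structure $\bgm_{\rho_1 \sigma_1} \bgm_{\rho_2 \sigma_2} \bgm_{\rho_3 \sigma_3}$ — that is, to turn ``this structure genuinely survives the contraction'' into a proof.
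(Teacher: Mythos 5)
Your strategy is sound and runs parallel to the paper's up to the last step: both split the vertex by order in \(\Lambda\), dispose of the \(\ordersb{\Lambda^0}\) part by citing the contraction identity of \cite{Prinz_7}, and then show that the momentum\--independent volume\--form contribution obstructs the identity at \(\ordersb{\Lambda^1}\). Where you diverge is in how the obstruction is exhibited. The paper factorizes \(\bbL^{\rho\sigma}_{\mu\nu} = \mathscr{L}^{\rho\sigma}_\tau \mathscr{G}^\tau_{\mu\nu}\) (from \cite[Lemma 3.16]{Prinz_7}) and computes the \(\mathscr{G}\)-contraction of the \(\ordersb{\Lambda^1}\) part in closed form, obtaining a rank\--3 tensor that vanishes only on\--shell; you instead contract with the full \(\bbL\) and isolate the coefficient of the single momentum\--free structure \(\bgm^{\rho_1\sigma_1}\bgm^{\rho_2\sigma_2}\bgm^{\rho_3\sigma_3}\), which no momentum\--carrying structure can reproduce under \(p_1+p_2+p_3=0\). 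Your route buys a logically self\--contained statement about the \(\bbL\)-contraction itself (one need not separately argue that the outer \(\mathscr{L}\)-factors do not annihilate a nonzero \(\mathscr{G}\)-contraction) at the price of a combinatorial check, which you correctly flag: that \(c_0\neq 0\) and that the non\--constant terms cannot conspire to equal \(-c_0\). The first is immediate once you note that the fully diagonal monomial is fixed by all \(48\) symmetrization operations (so it survives with coefficient proportional to \(\tfrac18\cdot 48\)), while the \(2\)-cycle and \(3\)-cycle monomials never become fully diagonal under relabelling, and that each diagonal trace \(\bbL^{\rho\sigma}_{\mu\nu}\bgm^{\mu\nu} = 2p^\rho p^\sigma/p^2 - \bgm^{\rho\sigma}\) contributes \(-1\) to the \(\bgm\)-coefficient; the second follows since a nonzero constant plus a manifestly non\--constant rational function of the invariants cannot vanish identically on the momentum\--conservation surface. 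With that check written out, your argument is complete and consistent with the paper's explicit result.
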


\begin{proof}
	This is verified by the following calculation: First, we use the following decomposition of \(\mathbbit{L}^{\rho \sigma}_{\mu \nu} = \mathscr{L}^{\rho \sigma}_\tau \mathscr{G}_{\mu \nu}^\tau\) into the two tensors \(\mathscr{G}_{\mu \nu}^\kappa := \frac{1}{p^2} \big ( p_\mu \delta_\nu^\kappa + p_\nu \delta_\mu^\kappa \big )\) and \(\mathscr{L}^{\rho \sigma}_\lambda := \frac{1}{2} \big ( p^\rho \delta^\sigma_\lambda + p^\sigma \delta^\rho_\lambda - p_\lambda \bgm^{\rho \sigma} \big )\) that has been established in \cite[Lemma 3.16]{Prinz_7}. Then, we recall from \cite[Theorem 3.23]{Prinz_7}
	\begin{align}
	\begin{split}
		\eval{\Phi \left ( \scriptstyle{\mathscr{G}} \tcgreen{v-gravitontriple} _{\scriptstyle{\mathscr{G}}}^{\scriptstyle{\mathscr{G}}} \right )}_{\ordersb{\Lambda^0}} & \simeq_\textup{MC} 0 \, .
	\end{split}
	\intertext{Thus, using \eqnsref{eqns:grav-fr-cc-triple}, the claimed result follows from}
	\begin{split}
		\eval{\Phi \left ( \scriptstyle{\mathscr{G}} \tcgreen{v-gravitontriple} _{\scriptstyle{\mathscr{G}}}^{\scriptstyle{\mathscr{G}}} \right )}_{\ordersb{\Lambda^1}} & \simeq_\textup{MC} - \frac{\imaginary \varkappa \Lambda}{p_1^2 p_2^2 \left ( p_1 + p_2 \right )^2} \bigg ( \bgm^{\tau_1 \tau_2} \left ( - p_1^2 p_2^{\tau_3} - p_2^2 p_1^{\tau_3} \right ) \\
		& \phantom{\simeq_\textup{MC} - \frac{\imaginary \varkappa \Lambda}{p_1^2 p_2^2 \left ( p_1 + p_2 \right )^2} \bigg (} + \bgm^{\tau_1 \tau_3} \left ( p_1^2 \left ( p_1^{\tau_2} + p_2^{\tau_2} \right ) - \left ( p_1 + p_2 \right )^2 p_1^{\tau_2} \right ) \\
		& \phantom{\simeq_\textup{MC} - \frac{\imaginary \varkappa \Lambda}{p_1^2 p_2^2 \left ( p_1 + p_2 \right )^2} \bigg (} + \bgm^{\tau_2 \tau_3} \left ( p_2^2 \left ( p_1^{\tau_1} + p_2^{\tau_1} \right ) - \left ( p_1 + p_2 \right )^2 p_2^{\tau_1} \right ) \! \bigg ) \\
		& \not \simeq_\textup{MC} 0 \, ,
	\end{split}
	\end{align}
	which does only vanish on-shell.
\end{proof}

\enter

\begin{rem} \label{rem:contraction-single-graviton}
	\eqnref{eqn:contraction_v-triple-graviton-cc} indicates that the question of transversality is more involved, if a cosmological constant is added to (effective) Quantum General Relativity. This will be studied in future work.
\end{rem}

\section{Conclusion} \label{sec:conclusion}

We have studied (effective) Quantum General Relativity with a cosmological constant from the viewpoint of renormalization. To this end, we started with its Lagrange density in \eqnsref{eqns:lagrange-density-qgrcc} and discussed the corresponding quantum gauge symmetries in \eqnsref{eqns:z-factor_identities_qgrcc} together with appropriate renormalization conditions for the graviton propagator in \eqnsref{eqns:ren-conditions-grav-propagator-qgrcc}, cf.\ \cite{Prinz_3} for the situation of a vanishing cosmological constant. Then, we calculated the corresponding graviton vertex Feynman rules for arbitrary valence in \thmref{thm:grav-fr-cc} and the graviton propagator Feynman rule for a general gauge parameter in \thmref{thm:grav-prop-cc}. Additionally, we displayed the three-valent and four-valent vertex Feynman rules explicitly in \exref{exmp:grav-fr-cc}, cf.\ \cite{Prinz_4} for the situation of a vanishing cosmological constant. Next, we constructed the Faddeev--Popov ghost, symmetric ghost and homotopy ghost Lagrange densities in \colssaref{col:linearized_de_donder_gauge_fixing_fermion-cc}{col:symmetric-ghost-qgrcc}{col:symmetric-ghost-homotopy-qgrcc} from the viewpoint of BRST cohomology. We refer to \cite{Prinz_5,Prinz_6} for the situation of a vanishing cosmological constant. In addition, we studied its transversality, in particular the decomposition of the graviton propagator Feynman rule in \thmref{thm:feynman-rule-graviton-propagator-cc-lt-decomposition} and the contraction identity of the three-valent graviton vertex Feynman rule in \thmref{thm:three-valent-contraction-identities-qgrcc}, cf.\ \cite{Prinz_7} for the situation of a vanishing cosmological constant.

Finally, we want to mention a different view on the cosmological constant: The possibility to use it as a diffeomorphism-invariant regulator for pure gravity. In this case, a convenient choice would be \(\Lambda \in \mathbb{C}\) and \(\bgm_{\mu \nu} := \eta_{\mu \nu}\), so that the final result (after the limit \(\Lambda \mapsto 0\)) can be considered on the Minkowski background spacetime. This claim is due to the fact that (effective) Quantum General Relativity with only \(\Lambda\)-vertices is renormalizable by power counting. This will be studied in future work.

\section*{Acknowledgments}
\addcontentsline{toc}{section}{Acknowledgments}

This article was initiated by a question asked by my supervisor Dirk Kreimer during my disputation. Having said this, it is my pleasure to express my gratitude for his insight and support over the past years! This research is supported by the \emph{Riemann Center for Geometry and Physics} of the Leibniz University Hannover via a \emph{Riemann fellowship} in the group of Domenico Giulini. It is my pleasure to thank Domenico Giulini, Philip Schwartz and Lorenzo Casarin for a welcoming atmosphere and an inspiring intellectual exchange.

\bibliography{References}{}

\begin{thebibliography}{10}
  \providebibliographyfont{name}{}%
  \providebibliographyfont{lastname}{}%
  \providebibliographyfont{title}{\emph}%
  \providebibliographyfont{jtitle}{\btxtitlefont}%
  \providebibliographyfont{etal}{\emph}%
  \providebibliographyfont{journal}{}%
  \providebibliographyfont{volume}{}%
  \providebibliographyfont{ISBN}{\MakeUppercase}%
  \providebibliographyfont{ISSN}{\MakeUppercase}%
  \providebibliographyfont{url}{\url}%
  \providebibliographyfont{numeral}{}%
  \expandafter\btxselectlanguage\expandafter {\btxfallbacklanguage}

\btxselectlanguage {english}
\bibitem {tHooft_Veltman_OLD}
\btxnamefont {\btxlastnamefont {{G. 't Hooft}}} \btxandlong {}\ \btxnamefont
  {\btxlastnamefont {{M. J. G. Veltman}}}\btxauthorcolon\ \btxjtitlefont
  {\btxifchangecase {{O}ne loop divergencies in the theory of
  gravitation}{{O}ne loop divergencies in the theory of gravitation}}.
\newblock \btxjournalfont {Ann. Inst. H. Poincare Phys. Theor. A}, 20:69--94,
  1974.

\bibitem {Goroff_Sagnotti_TLD}
\btxnamefont {\btxlastnamefont {{M. H. Goroff}}} \btxandlong {}\ \btxnamefont
  {\btxlastnamefont {{A. Sagnotti}}}\btxauthorcolon\ \btxjtitlefont
  {\btxifchangecase {{Q}uantum {G}ravity at {T}wo {L}oops}{{Q}uantum {G}ravity
  at {T}wo {L}oops}}.
\newblock \btxjournalfont {Phys. Lett. B}, 160:81--86, 1985.

\bibitem {Abbott_BFM}
\btxnamefont {\btxlastnamefont {{L. F. Abbott}}}\btxauthorcolon\ \btxjtitlefont
  {\btxifchangecase {{I}ntroduction to the {B}ackground {F}ield
  {M}ethod}{{I}ntroduction to the {B}ackground {F}ield {M}ethod}}.
\newblock \btxjournalfont {Acta Phys. Polon. B}, 13:33, 1982.

\bibitem {Krasnov}
\btxnamefont {\btxlastnamefont {{K. Krasnov}}}\btxauthorcolon\ \btxjtitlefont
  {\btxifchangecase {{E}ffective metric {L}agrangians from an underlying theory
  with two propagating degrees of freedom}{{E}ffective metric {L}agrangians
  from an underlying theory with two propagating degrees of freedom}}.
\newblock \btxjournalfont {Phys. Rev. D}, 81:084026, 2010.
\newblock arXiv:0911.4903v1 [hep-th].

\bibitem {Kreimer_Hopf_Algebra}
\btxnamefont {\btxlastnamefont {{D. Kreimer}}}\btxauthorcolon\ \btxjtitlefont
  {\btxifchangecase {{O}n the {H}opf algebra structure of perturbative quantum
  field theories}{{O}n the {H}opf algebra structure of perturbative quantum
  field theories}}.
\newblock \btxjournalfont {Adv. Theor. Math. Phys.}, 2:303--334, 1998.
\newblock arXiv:q-alg/9707029v4.

\bibitem {Connes_Kreimer_NG}
\btxnamefont {\btxlastnamefont {{A. Connes}}} \btxandlong {}\ \btxnamefont
  {\btxlastnamefont {{D. Kreimer}}}\btxauthorcolon\ \btxjtitlefont
  {\btxifchangecase {{H}opf {A}lgebras, {R}enormalization and {N}oncommutative
  {G}eometry}{{H}opf {A}lgebras, {R}enormalization and {N}oncommutative
  {G}eometry}}.
\newblock \btxjournalfont {Commun. Math. Phys.}, 199:203--242, 1998.
\newblock arXiv:hep-th/9808042v1.

\bibitem {Connes_Kreimer_0}
\btxnamefont {\btxlastnamefont {{A. Connes}}} \btxandlong {}\ \btxnamefont
  {\btxlastnamefont {{D. Kreimer}}}\btxauthorcolon\ \btxjtitlefont
  {\btxifchangecase {{R}enormalization in quantum field theory and the
  {R}iemann-{H}ilbert problem}{{R}enormalization in quantum field theory and
  the {R}iemann-{H}ilbert problem}}.
\newblock \btxjournalfont {JHEP}, 09:024, 1999.
\newblock arXiv:hep-th/9909126v3.

\bibitem {Connes_Kreimer_1}
\btxnamefont {\btxlastnamefont {{A. Connes}}} \btxandlong {}\ \btxnamefont
  {\btxlastnamefont {{D. Kreimer}}}\btxauthorcolon\ \btxjtitlefont
  {\btxifchangecase {{R}enormalization in quantum field theory and the
  {R}iemann-{H}ilbert problem {I}: the {H}opf algebra structure of graphs and
  the main theorem}{{R}enormalization in quantum field theory and the
  {R}iemann-{H}ilbert problem {I}: the {H}opf algebra structure of graphs and
  the main theorem}}.
\newblock \btxjournalfont {Commun. Math. Phys.}, 210:249--273, 1999.
\newblock arXiv:hep-th/9912092v1.

\bibitem {Connes_Kreimer_2}
\btxnamefont {\btxlastnamefont {{A. Connes}}} \btxandlong {}\ \btxnamefont
  {\btxlastnamefont {{D. Kreimer}}}\btxauthorcolon\ \btxjtitlefont
  {\btxifchangecase {{R}enormalization in quantum field theory and the
  {R}iemann-{H}ilbert problem {II}: the \(\beta\)-function, diffeomorphisms and
  the renormalization group}{{R}enormalization in quantum field theory and the
  {R}iemann-{H}ilbert problem {II}: the \(\beta\)-function, diffeomorphisms and
  the renormalization group}}.
\newblock \btxjournalfont {Commun. Math. Phys.}, 216:215--241, 2000.
\newblock arXiv:hep-th/0003188v1.

\bibitem {Kreimer_Anatomy}
\btxnamefont {\btxlastnamefont {{D. Kreimer}}}\btxauthorcolon\ \btxjtitlefont
  {\btxifchangecase {{A}natomy of a gauge theory}{{A}natomy of a gauge
  theory}}.
\newblock \btxjournalfont {Annals Phys.}, 321:2757--2781, 2006.
\newblock arXiv:hep-th/0509135v3.

\bibitem {vSuijlekom_QED}
\btxnamefont {\btxlastnamefont {{W. D. van Suijlekom}}}\btxauthorcolon\
  \btxjtitlefont {\btxifchangecase {{T}he {H}opf algebra of {F}eynman graphs in
  {QED}}{{T}he {H}opf algebra of {F}eynman graphs in {QED}}}.
\newblock \btxjournalfont {Lett. Math. Phys.}, 77:265--281, 2006.
\newblock arXiv:hep-th/0602126v2.

\bibitem {vSuijlekom_QCD}
\btxnamefont {\btxlastnamefont {{W. D. van Suijlekom}}}\btxauthorcolon\
  \btxjtitlefont {\btxifchangecase {{R}enormalization of gauge fields: {A}
  {H}opf algebra approach}{{R}enormalization of gauge fields: {A} {H}opf
  algebra approach}}.
\newblock \btxjournalfont {Commun. Math. Phys.}, 276:773--798, 2007.
\newblock arXiv:hep-th/0610137v1.

\bibitem {vSuijlekom_BV}
\btxnamefont {\btxlastnamefont {{W. D. van Suijlekom}}}\btxauthorcolon\
  \btxjtitlefont {\btxifchangecase {{T}he structure of renormalization {H}opf
  algebras for gauge theories {I}: {R}epresenting {F}eynman graphs on
  {BV}-algebras}{{T}he structure of renormalization {H}opf algebras for gauge
  theories {I}: {R}epresenting {F}eynman graphs on {BV}-algebras}}.
\newblock \btxjournalfont {Commun. Math. Phys.}, 290:291--319, 2009.
\newblock arXiv:0807.0999v2 [math-ph].

\bibitem {Kreimer_vSuijlekom}
\btxnamefont {\btxlastnamefont {{D. Kreimer}}} \btxandlong {}\ \btxnamefont
  {\btxlastnamefont {{W. D. van Suijlekom}}}\btxauthorcolon\ \btxjtitlefont
  {\btxifchangecase {{R}ecursive relations in the core {H}opf
  algebra}{{R}ecursive relations in the core {H}opf algebra}}.
\newblock \btxjournalfont {Nucl. Phys. B}, 820:682--693, 2009.
\newblock arXiv:0903.2849v1 [hep-th].

\bibitem {Kreimer_QG1}
\btxnamefont {\btxlastnamefont {{D. Kreimer}}}\btxauthorcolon\ \btxjtitlefont
  {\btxifchangecase {{A} remark on quantum gravity}{{A} remark on quantum
  gravity}}.
\newblock \btxjournalfont {Annals Phys.}, 323:49--60, 2008.
\newblock arXiv:0705.3897v1 [hep-th].

\bibitem {Kreimer_QG2}
\btxnamefont {\btxlastnamefont {{D. Kreimer}}}\btxauthorcolon\ \btxjtitlefont
  {\btxifchangecase {{N}ot so non-renormalizable gravity}{{N}ot so
  non-renormalizable gravity}}.
\newblock \btxjournalfont {"Quantum Field Theory: Competitive Models",
  B.Fauser, J.Tolksdorf, E.Zeidlers, eds., Birkhaeuser (2009)}, 2008.
\newblock arXiv:0805.4545v1 [hep-th].

\bibitem {Prinz_2}
\btxnamefont {\btxlastnamefont {{D. Prinz}}}\btxauthorcolon\ \btxjtitlefont
  {\btxifchangecase {{A}lgebraic {S}tructures in the {C}oupling of {G}ravity to
  {G}auge {T}heories}{{A}lgebraic {S}tructures in the {C}oupling of {G}ravity
  to {G}auge {T}heories}}.
\newblock \btxjournalfont {Annals Phys.}, 426:168395, 2021.
\newblock arXiv:1812.09919v3 [hep-th].

\bibitem {Prinz_3}
\btxnamefont {\btxlastnamefont {{D. Prinz}}}\btxauthorcolon\ \btxjtitlefont
  {\btxifchangecase {{G}auge {S}ymmetries and {R}enormalization}{{G}auge
  {S}ymmetries and {R}enormalization}}.
\newblock \btxjournalfont {Math. Phys. Anal. Geom.}, 25(3):20, 2022.
\newblock arXiv:2001.00104v4 [math-ph].

\bibitem {Prinz_4}
\btxnamefont {\btxlastnamefont {{D. Prinz}}}\btxauthorcolon\ \btxjtitlefont
  {\btxifchangecase {{G}ravity-{M}atter {F}eynman {R}ules for any
  {V}alence}{{G}ravity-{M}atter {F}eynman {R}ules for any {V}alence}}.
\newblock \btxjournalfont {Class. Quantum Grav.}, 38:215003, 2021.
\newblock arXiv:2004.09543v4 [hep-th].

\bibitem {Prinz_5}
\btxnamefont {\btxlastnamefont {{D. Prinz}}}\btxauthorcolon\ \btxtitlefont
  {\btxifchangecase {{T}he {BRST} {D}ouble {C}omplex for the {C}oupling of
  {G}ravity to {G}auge {T}heories}{{T}he {BRST} {D}ouble {C}omplex for the
  {C}oupling of {G}ravity to {G}auge {T}heories}}, 2022.
\newblock arXiv:2206.00780v1 [hep-th].

\bibitem {Prinz_6}
\btxnamefont {\btxlastnamefont {{D. Prinz}}}\btxauthorcolon\ \btxtitlefont
  {\btxifchangecase {{S}ymmetric {G}host {L}agrange {D}ensities for the
  {C}oupling of {G}ravity to {G}auge {T}heories}{{S}ymmetric {G}host {L}agrange
  {D}ensities for the {C}oupling of {G}ravity to {G}auge {T}heories}}, 2022.
\newblock arXiv:2207.07593v1 [hep-th].

\bibitem {Prinz_7}
\btxnamefont {\btxlastnamefont {{D. Prinz}}}\btxauthorcolon\ \btxtitlefont
  {\btxifchangecase {{T}ransversality in the {C}oupling of {G}ravity to {G}auge
  {T}heories}{{T}ransversality in the {C}oupling of {G}ravity to {G}auge
  {T}heories}}, 2022.
\newblock arXiv:2208.14166v1 [hep-th].

\bibitem {Prinz_PhD}
\btxnamefont {\btxlastnamefont {{D. Prinz}}}\btxauthorcolon\ \btxtitlefont
  {{R}enormalization of {G}auge {T}heories and {G}ravity}.
\newblock \btxphdthesis {}, Humboldt University of Berlin, 2022.
\newblock Available at \url{https://doi.org/10.18452/25401} and
  arXiv:2210.17510v1 [hep-th].

\bibitem {Ward}
\btxnamefont {\btxlastnamefont {{J. C. Ward}}}\btxauthorcolon\ \btxjtitlefont
  {\btxifchangecase {{A}n {I}dentity in {Q}uantum {E}lectrodynamics}{{A}n
  {I}dentity in {Q}uantum {E}lectrodynamics}}.
\newblock \btxjournalfont {Phys. Rev.}, 78:182, 1950.

\bibitem {Takahashi}
\btxnamefont {\btxlastnamefont {{Y. Takahashi}}}\btxauthorcolon\ \btxjtitlefont
  {\btxifchangecase {{O}n the {G}eneralized {W}ard {I}dentity}{{O}n the
  {G}eneralized {W}ard {I}dentity}}.
\newblock \btxjournalfont {Nuovo Cim.}, 6:371, 1957.

\bibitem {tHooft}
\btxnamefont {\btxlastnamefont {{G. 't Hooft}}}\btxauthorcolon\ \btxjtitlefont
  {\btxifchangecase {{R}enormalization of {M}assless {Y}ang-{M}ills
  {F}ields}{{R}enormalization of {M}assless {Y}ang-{M}ills {F}ields}}.
\newblock \btxjournalfont {Nucl. Phys. B}, 33 (1):173--199, 1971.

\bibitem {Taylor}
\btxnamefont {\btxlastnamefont {{J. C. Taylor}}}\btxauthorcolon\ \btxjtitlefont
  {\btxifchangecase {{W}ard identities and charge renormalization of the
  {Y}ang-{M}ills field}{{W}ard identities and charge renormalization of the
  {Y}ang-{M}ills field}}.
\newblock \btxjournalfont {Nucl. Phys. B}, 33 (2):436--444, 1971.

\bibitem {Slavnov}
\btxnamefont {\btxlastnamefont {{A. A. Slavnov}}}\btxauthorcolon\
  \btxjtitlefont {\btxifchangecase {{W}ard identities in gauge theories}{{W}ard
  identities in gauge theories}}.
\newblock \btxjournalfont {Theor. Math. Phys.}, 10 (2):99--104, 1972.

\bibitem {DeWitt_I}
\btxnamefont {\btxlastnamefont {{B. S. DeWitt}}}\btxauthorcolon\ \btxjtitlefont
  {\btxifchangecase {{Q}uantum {T}heory of {G}ravity. {I}. {T}he {C}anonical
  {T}heory}{{Q}uantum {T}heory of {G}ravity. {I}. {T}he {C}anonical {T}heory}}.
\newblock \btxjournalfont {Phys. Rev.}, 160:1113--1148, 1967.

\bibitem {Giulini_Kiefer}
\btxnamefont {\btxlastnamefont {{D. Giulini}}} \btxandlong {}\ \btxnamefont
  {\btxlastnamefont {{C. Kiefer}}}\btxauthorcolon\ \btxjtitlefont
  {\btxifchangecase {{W}heeler--{D}e{W}itt metric and the attractivity of
  gravity}{{W}heeler--{D}e{W}itt metric and the attractivity of gravity}}.
\newblock \btxjournalfont {Phys. Lett. A}, 193:21--24, 1994.
\newblock arXiv:gr-qc/9405040v2.

\end{thebibliography}
\bibliographystyle{babunsrt}

\end{document}